\newcommand{\taxa}{{\Sigma}} 
\newcommand{\genes}{{\Gamma}}
\newcommand{\genetrees}{{\mathcal G}}
\newcommand{\define}[1]{\emph{#1}}
\renewcommand{\L}{{\mathcal L}}
\renewcommand{\l}{{\ell}}
\newtheorem{corollary}{Corollary}
\newtheorem{lemma}{Lemma}
\newtheorem{property}{Property}
\newtheorem{theorem}{Theorem}
\newtheorem{definition}{Definition}
\begin{document}

%\firstpage{1}
%
%\subtitle{Subject Section}

\title{Gene Tree Construction and Correction using SuperTree and Reconciliation}
\author{\IEEEauthorblockN{Manuel Lafond}
\IEEEauthorblockA{D\'epartement d'Informatique (DIRO)\\
Universi\'e de Montr\'eal\\
Montreal, QC, Canada \\
Email: lafonman@iro.umontreal.ca}
\and
\IEEEauthorblockN{Cedric Chauve}
\IEEEauthorblockA{Department of Mathematics \\ 
Simon Fraser University\\
Burnaby, BC, Canada \\
Email: cedric.chauve@sfu.ca}
\and
\IEEEauthorblockN{Nadia El-Mabrouk}
\IEEEauthorblockA{D\'epartement d'Informatique (DIRO) \\
Universi\'e de Montr\'eal\\
Montreal, QC, Canada \\
Email: mabrouk@iro.umontreal.ca}
\and
\IEEEauthorblockN{A\"ida Ouangraoua}
\IEEEauthorblockA{D\'epartement d'Informatique\\
Universit\'e de Sherbrooke\\
Sherbrooke, QC, Canada\\
Email: aida.ouangraoua@usherbrooke.ca}
}
\maketitle
 
\begin{abstract}
The supertree problem asking for a tree displaying a set of consistent
input trees has been largely considered for the reconstruction of
species trees. Here, we rather explore this framework for the sake of
reconstructing a gene tree from a set of input gene trees on partial
data. In this perspective, the phylogenetic tree for the species
containing the genes of interest can be used to choose among the many
possible compatible ``supergenetrees'', the most natural criteria
being to minimize a reconciliation cost. We develop a variety of
algorithmic solutions for the construction and correction of gene
trees using the supertree framework.  A dynamic programming supertree algorithm for constructing or 
correcting gene trees, exponential in the number of input trees,
is first developed for the less constrained version of the problem. 
It is then adapted to gene trees with nodes labeled as duplication or
speciation, the additional constraint being to preserve the orthology
and paralogy relations between genes. Then, a quadratic time algorithm is developed for efficiently correcting an initial gene tree while preserving a set of ``trusted'' subtrees, as well as the relative phylogenetic distance between them, in both cases of labeled or
unlabeled input trees. By applying these algorithms to the set of
Ensembl gene trees, we show that this new correction framework is
particularly useful to correct weakly-supported duplication
nodes. The C++ source code for the algorithms and simulations described
in the paper are available at \verb+https://github.com/UdeM-LBIT/SuGeT+.
\end{abstract}
\IEEEpeerreviewmaketitle

\section{Introduction}
The supertree problem consists in combining a set of input
phylogenetic trees on possibly overlapping sets of data, into a single
one for the whole set (see for
example~\cite{Bininda04,Bansal10,Warnow12,PhySIC07,
  Douzery10,Steel08,Warnow12b}). Ideally, the obtained tree should
display each of the input trees,
%, i.e. the supertree restricted to the leaf-set of
%an input tree should be label-isomorphic to that tree.
which is only possible if they are  ``consistent'' i.e. if
they do not contain conflicting phylogenetic information.
%More generally, if the input trees are not consistent, then the
%problem becomes that of finding a tree for the full set of data
%minimizing conflicts with the input trees. Several methods have been
%developed for different versions of this problem varying on the way
%tree conflicts are evaluated.
The simplest formulation of the supertree problem is therefore to
state whether an input set of trees is consistent, and if so, find a
``compatible'' tree, called a {\em supertree}\/, displaying them
all. This problem is NP-complete for unrooted
trees~\cite{Steel92,Scornavacca14}, but solvable in polynomial time for
rooted trees~\cite{Aho81,Sankoff95,Ng96,Semple03}. However, even for
rooted trees the set of all possible supertrees may be exponential in
the number of genes.
%In this case, for trees to be consistent, each triplet of data
%should exhibit the same topology in all trees. The BUILD
%algorithm~\cite{Aho81} can be used to test, in polynomial-time,
%whether a collection of rooted trees is consistent, and if so,
%construct a compatible, not necessarily fully resolved,
%supertree. This algorithm has been generalized to output all
%compatible supertrees~\cite{Sankoff95,Ng96,Semple03}, which may be
%exponential in the number of genes.

Supertree methods have been mainly designed to reconstruct a species
tree from gene trees obtained for various gene families. However, they
can have applications for gene tree reconstruction as
well. Indeed, they may be used to combine partial trees on overlapping
gene sets available from various sources (various databases, various
reconstruction tools, \textit{etc}). Alternatively, in the case of
large gene families, they may be used to combine gene trees for
smaller sets of orthologs, usually obtained from clustering algorithms
such as OrthoMCL~\cite{OrthoMCL03}, InParanoid~\cite{InParanoid08} or
Proteinortho~\cite{Proteinortho11}.  In such a case, ideally,
orthology relations should be preserved in the final tree. More
generally, given a set of input ``labeled gene trees'', i.e. gene
trees with internal nodes labeled as duplication or speciation, we may
be interested in a supertree preserving this labeling. As far as we know, no automated method accounting for labeling constraints has never been proposed. Here, we consider the problem of reconstructing a ``supergenetree'' in both cases of a labeled or unlabeled set of input gene trees.

In this paper, we also show that the supertree principle can be used
for gene tree correction. For various reasons related to the
considered model, method or data, gene trees can contain many errors (see for example~\cite{HAHN07} for a link with dubious high duplication nodes), and trees frequently exhibit branches with low statistical
support. Two main approaches exist to correct gene trees, based on
a local exploration principle to identify closely related trees that might have a better
statistical support~\cite{TREEFIX}, a better reconciliation
cost~\cite{CHEN-JCOMPBIOL7,Zheng-Zhang2014,SwensonMabrouk12} or a combination of
both~\cite{TREEFIXDTL,noutahi2016}. In the present work, we consider the second
approach, based on the reconciliation cost with a given species tree.
A way of correcting a gene tree is to remove weakly-supported
branches, leading to a set of subtrees, that should then be merged  into a new one, according to some criterion. The most
commonly considered criterion is a best fit with the species tree. A
simple way is to consider the set of subtrees as the leaves of a
polytomy (star-tree), and to resolve the polytomy in a way minimizing
the reconciliation cost with the species
tree (see NOTUNG~\cite{CHEN-JCOMPBIOL7}, the Zheng and Zhang algorithm~\cite{Zheng-Zhang2014}, PolytomySolver~\cite{LafondNoutahi2016}). Such a
correction method, not only preserves the input subtrees, but also the
gene clusters inside the subtrees. In other words, the exhibited
monophily of input gene clusters is not challenged by a polytomy
resolution method. However, it has been shown that genes under
negative selection, while exhibiting the true topology, may be wrongly
grouped into monophyletic groups (see for example
~\cite{Liberles08,Skovgaard06,Taylor05,SwensonMabrouk12}). In this
perspective, using a supertree method may be beneficial, as it
preserves the topology of subtrees while allowing to group
genes from different subtrees.

In~\cite{Lafond15}, we introduced under the name of {\sc Minimum
  SuperGeneTree} ($MinSGT$) the problem of finding, for a set of gene trees,
a supertree that minimizes the reconciliation cost with a given
species tree. Under the duplication cost, we have shown that this
problem is NP-hard to approximate within a $n^{1 - \epsilon}$ factor,
for any $0 < \epsilon < 1$, even for instances in which there is only
one gene per species in the input trees, and even if each gene appears
in at most one input tree.  In this paper, we carry out on $MinSGT$ but for the more general
reconciliation cost. Although
NP-hardness proofs for the duplication cost do not apply to the
duplication plus loss cost, the problem is conjectured NP-hard for
this more general reconciliation cost, as accounting for losses in
addition to duplications is unlikely to make the problem
simpler. Given a set of consistent input gene trees, we provide
various algorithmic results depending on the additional information we
have on the trees.  

In Section~\ref{section-minSGT}, we first exhibit
a dynamic programming algorithm for the general case, exponential in
the number of input trees. We show how this algorithm can be adapted
to %the {\sc Minimum Labeled SuperGeneTree Problem}, asking for 
compute a supertree preserving the input trees labeling, as motivated above.
%initial node-labeling of input gene trees,
In Section~\ref{triplets}, we then consider the correction problem with as input a gene tree together with a set
of subtrees which topology should be preserved in the final
supertree. To avoid having a supertree grouping genes that are far
apart in the original tree, the relative phylogenetic distance between
gene clusters is considered as an additional constraint.  Inpired by the {\sc Minimum Triplet Respecting History} introduced in \cite{SwensonMabrouk12}, we
define the {\sc Minimum Triplet Respecting SuperGeneTree Problem}
asking for a supertree displaying all input subtrees, while preserving
the topology of any triplet of genes taken from three different
subtrees. We
develop a quadratic-time algorithm for this problem.  Finally, in
Section~\ref{application}, by applying these algorithms to a set of a few hundreds
Ensembl vertebrate gene trees, we show that this new correction framework is
particularly useful to correct weakly-supported upper duplication
nodes, as we observe that the correction carried out by our
algorithms often improves significantly the likelihood scores.

\section {Preliminaries}% and Problem Statements}

All considered trees are rooted and binary.  We denote by $r(T)$ the
root, by $V(T)$ the set of nodes, and by $\L(T) \subseteq V(T)$ the
leafset of a tree $T$. We say that $T$ is a tree for $L=\L(T)$. Given
a node $x$ of $T$, the subtree of $T$ rooted at $x$ is denoted
$T[x]$. When there is no ambiguity on the considered tree, we simply
write $\L(x)$ instead of $\L(T[x])$.  We arbitrarily set one of the
two children of an internal node $x$ as the left child $x_l$ and the
other as the right child $x_r$, and denote by $(\L(x_l), \L(x_r))$ the
bipartition induced by $x$. Also for the sake of simplicity, we just
denote by $T_l$ and $T_r$ the left and right subtrees of the root of
$T$. A node $x$ is an \emph{ancestor} of a node $y$ if $x$ is on the
%(exclusive) %% "exclusive'' supprimée car sinon inter(s,t) n'est plus
%% correctement définie
path between $y$ and $r(T)$.
%In this case, $x$ is a \emph{strict ancestor}  of $y$
%if $x \neq y$. 
If $x$ is an ancestor of $y$,
%In this case, we write $x \preceq y$. 
$inter(x,y)$ is the number of nodes located on the path between $x$
and $y$, excluding $x$ and $y$.  Two nodes $x$ and $y$ are {\em
  separated}\/ in $T$ iff none is an ancestor of the other. In this
case, we also say that the two subtrees $T[x]$, $T[y]$ of $T$ are {\em
  separated}\/.

The \emph{lowest common ancestor} (lca) of $L' \subset \L(T)$,
denoted $lca_T(L')$, is the ancestor common to all leaves in $L'$ that
is the most distant from the root.  $T|_{L'}$ is the tree with leafset
$L'$ obtained from the subtree of $T$ rooted at $lca_T(L')$ by
removing all leaves that are not in $L'$, and then all internal nodes
of degree 2, except the root.  Let $T'$ be a tree such that $\L(T') =
L' \subseteq \L(T)$. We say that $T$ {\em displays}\/ $T'$ iff
$T|_{L'}$ is isomorphic to $T'$ while preserving the same leaf-labeling.\\
%We generalize this
%notation to a set of trees: For a set $\cal T$ of trees on $L$, ${\cal
%  T}|_{L'} = \{T|_{L'}\: : \: T \in {\cal T}\}$.

\begin{figure*}[!t]
\begin{center}
\includegraphics[width= .85 \textwidth]{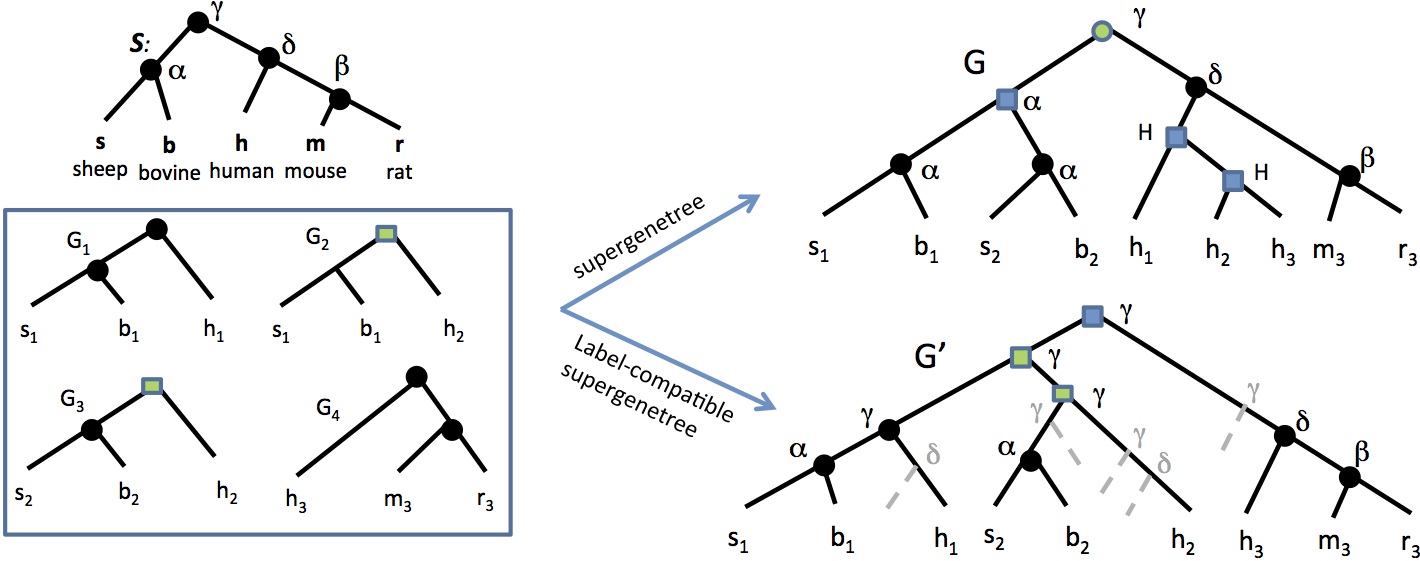}\hspace{1cm}
\caption{A species tree $S$ on $\Sigma = \{s,b,h,m,r\}$, and a set of
  labeled gene trees $\genetrees = \{G_1,G_2,G_3,G_4\}$, where each
  leaf $x_i$ denotes a gene belonging to $x$. Square nodes are
  duplications and circular nodes are speciations. Internal nodes are
  labeled according to corresponding ancestral species in
  $S$. Dotted lines are losses.  $G$ is a supergenetree for
  $\genetrees$ of minimum LCA-reconciliation cost (cost of $3$), while
  $G'$ is a label-compatible supergenetree for $\genetrees$ of
  reconciliation cost $8$ ($3$ duplications + $5$ losses). $G$ is not
  a label-compatible supergenetree due to the roots of $G_2$ and $G_3$
  which are duplications in $G_2$ and $G_3$ (green squares), but are
  mapped to a speciation node in $G$ (green circle). In $G'$, these
  nodes are correctly mapped to duplication nodes (green squares in
  $G'$).}\label{fig:supergenetree}
\end{center}
\end{figure*}

\noindent\textit{Gene and species trees.} 
A species tree $S$ for a set $\taxa$
% = \{\sigma_1, \cdots, \sigma_t\}$
of species represents an ordered set of speciation events that have
led to $\Sigma$.
A \emph{gene family} is a set of genes $\genes$ accompanied with a
\emph{mapping function} $s : \genes \rightarrow \taxa$ mapping each
gene to its corresponding species. Consider a gene family $\genes$
where each gene $x\in \genes$ belongs to a species $s(x)$ of $\taxa$.
The evolutionary history of $\genes$ can be represented as a {\em gene
  tree}\/ $G$ for $\genes$. For example, in
Figure~1, $G$ is a gene tree for $\genes =
\{s_1, s_2, b_1, b_2, h_1, h_2, h_3, m_3, r_3\}$. Each internal node
of $G$ refers to an ancestral gene at the moment of an event, either
speciation ($Spec$) or duplication ($Dup$). The mapping function $s$ is
generalized as follows: if $x$ is an internal node of $G$, then
%$s(x) = lca_S(\{s(x') : x' \in \L(G_x) \})$.
$s(x) = lca_S(\{s(x') : x' \in \L(x) \})$. 

When the type of event is known for each internal node, the gene tree
$G$ is said {\em labeled}\/. Formally, a {\em labeled gene tree}\/ for
$\genes$ is a pair $(G, ev_G)$, where $G$ is a tree for $\L(G) =
\genes$, and $ev_G:V(G) \setminus \L(G) \rightarrow \{Dup, Spec\}$ is
a function labeling each internal node of $G$ as a duplication or a
speciation node.
% (we drop the $G$
%subscript from $ev_G$ when it is clear from the context).

According to the Fitch~\cite{Fitch2000} terminology, given a labeled
gene tree $(G, ev_G)$, we say that two genes $x, y$ are
\emph{orthologs} if $ev_G(lca_G(x, y)) = Spec$, and
\emph{paralogs} if $ev_G(lca_G(x, y)) = Dup$. For example, from the
set of labeled gene trees in Figure~1, $s_1,
h_1$ are orthologs while $s_1, h_2$ are paralogs.

While a history for $\genes$ can be represented as a labeled gene
tree, the converse is not always true, as a labeled tree $(G, ev_G)$
for $\genes$ does not necessarily represent a valid history in
agreement with a species tree $S$. For this to hold, $(G, ev_G)$
should be \emph{$S$-consistent}, i.e. any speciation node of $(G,
ev_G)$ should reflect the same clustering of species as in $S$
(see~\cite{lafond2014orthology} for a formal definition of
$S$-consistency). \\

%\begin{definition}\label{def-S-consistent}
%Let $S$ be a species tree and $(G, ev_G)$ be a labeled tree.  Let $x$ be an
%internal node of $G$ such that $ev(x) = Spec$.  Then the speciation
%node $x$ is \emph{$S$-consistent} iff $s(x_l)$ and $s(x_r)$ are separated in $S$.
%%
%We say that $(G, ev_G)$ is \emph{$S$-consistent} iff every speciation node of
%$(G, ev_G)$ is $S$-consistent.
%\end{definition}

\noindent\textit{Reconciliation.} 
The {\em LCA-reconciliation}\/ of
$G$ with $S$ is the labeled tree $(G, ev_G)$ obtained by labeling each
node $x$ of $G$ as $Spec$ if and only if $s(x_l)$ and $s(x_r)$ are
separated in $S$, and as $Dup$ otherwise. It follows that the
%LCA-mapping : renommé en LCA-reconciliation partout
LCA-reconciliation of $G$ with $S$ is an $S$-consistent tree. In
Figure~1, $G$ is labeled according to
the LCA-reconciliation.
%The converse is not true, as a node labeled
%as $Spec$ in the LCA-reconciliation may be labeled as $Dup$ in another
%$S$-consistent labeled tree.

Given a labeled gene tree $(G, ev_G)$, the {\em duplication cost}\/ of
$(G, ev_G)$ is its number of duplication nodes.  It reflects the
number of duplications required to explain the evolution of the gene
family inside the species tree $S$ according to $G$.  A well-known
reconciliation approach~\cite{Zhang97,CHEN-JCOMPBIOL7,CHAUVE-ELMABROUK09}
allows to further recover, in linear time, the minimum number of
losses underlined by such an evolutionary history. We refer to the
number of duplications and losses underlined by
%the reconciliation of
a labeled gene tree
%with a specie stree
as its {\em reconciliation cost}\/ in the general
case, and as its {\em LCA-reconciliation cost} if the tree is labeled
according to the LCA-reconciliation. \\

%Otherwise, $x$ is a {\em
%  duplication node} preceding the speciation event
%$lca_S(s(x_l),s(x_r))$ if $lca_S(s(x_l),s(x_r))$ is an internal node
%of $S$, otherwise it is a duplication inside the extant species
%$lca_S(s(x_l),s(x_r))$. Two genes $x,y$ of $\cal{L}(T)$ are
%\emph{orthologs in $T$} if their $lca_T(x,y)$ is a speciation node.

\noindent\textit{Supertree problems.}
%% {\bf ICI UN SUPERARBRE ETIQUETE LABEL-COMPATIBLE EST CALCULE
%%   MAIS LE CALCUL DU SCORE DE RECONCILIATION POUR CET ARBRE EST
%%   INDEPENDANT DE SON ETIQUETAGE.} 
Given a set $\genetrees$ of trees for possibly overlapping subsets of
$\genes$, the goal is to
find a single tree displaying them all. This is possible only if the
input trees are pairwise {\em consistent}\/. The consistency problem
of rooted trees has been largely studied. For trees to be consistent,
each triplet of data should exhibit the same topology in all
trees. The BUILD algorithm~\cite{Aho81} can be used to test, in
polynomial-time, whether a collection of rooted trees is consistent,
and if so, construct a compatible, not necessarily fully resolved,
supertree. This algorithm has been generalized to output all
compatible supertrees~\cite{Sankoff95,Ng96,Semple03}, which may be
exponential in the number of genes.

% -----------------------------------------------------------------------
\section{Algorithms for Minimum SuperGeneTree Problems}\label{section-minSGT}

We begin with the less constrained version of the problem. Given a set
$\genetrees$ of consistent input gene trees, we ask for a {\em
  compatible}\/ tree, also called \emph{supergenetree} $G$ for
$\genetrees$, i.e. a tree displaying each tree of $\genetrees$. In
addition, among all supergenetrees for $\genetrees$, $G$ should be of
minimum LCA-reconciliation cost (see $G$ in
Figure~1).\\

\noindent \textsc{Minimum SuperGeneTree ($MinSGT$) Problem:}\\
\noindent {\bf Input:} A species set $\taxa$ and a species tree $S$
for $\taxa$; a gene family $\genes$ of size $n$, a set $\genes_{i, 1 \leq i \leq
  k}$ of subsets of $\genes$ such that $\bigcup_{i=1}^{k}{\genes_i} = \genes$,
and a consistent set $\genetrees = \{G_1, G_2,\cdots, G_k \}$
of gene trees such that, for each $1 \leq i
\leq k$, $G_i$ is a tree for $\genes_i$. \\
%\noindent {\bf Output:} Among all $S$-consistent gene trees $(G,
%ev_G)$ for $\genes$ compatible with $\genetrees$, one of minimum
%reconciliation cost.\\
\noindent {\bf Output:} Among all trees $G$ for $\genes$ compatible
with $\genetrees$, one of minimum LCA-reconciliation cost.\\

Suppose now that the input trees are labeled, and consider this
labeling as an additional constraint. The problem becomes one of
finding a labeled supergenetree preserving the input gene trees node
labeling. As a labeled gene tree induces a full orthology and paralogy
relation on the set of its leaves, this is possible only if the set of
relations is {\em satisfiable}\/, i.e. if there is a labeled tree $(G,
ev_G)$ displaying the relations induced by all the input trees, and if
there is such a tree which is $S$-consistent. Satisfiability is a
well-studied problem. It reduces to verifying if a relation graph $R$
(vertices are genes and edges link orthologous genes) is $P_4$-free,
i.e. no four vertices of $R$ induce a path of length
$3$~\cite{Hellmuth-2013}. On the other hand, a cubic-time algorithm
was developed in~\cite{lafond2014orthology} for deciding whether
a set of relations is $S$-consistent. Hereafter, we assume that the
relations induced by the input trees are satisfiable and
$S$-consistent.

Let $G$ and $G'$ be two trees with $\L(G') \subseteq \L(G)$ such that
$G$ displays $G'$. Then $(G,ev_G)$ is said {\em label-compatible} with
$(G', ev_{G'})$ iff, for any internal node $x$ of $G$ and $x'$ of $G'$
such that $x = lca_{G}(\L(x'))$, $ev_G(x) = ev_{G'}(x')$. A labeled
supergenetree $G$ for a set $\genetrees$ of trees is said
\define{label-compatible} with $\genetrees$ iff it is label-compatible
with each of the labeled trees of $\genetrees$. An illustration is
provided by the supergenetree $G'$ in
Figure~1. We are now ready to formulate our
second problem. \\

\noindent \textsc{Minimum Labeled SuperGeneTree ($MinLSGT$) Problem:}\\
\noindent {\bf Input:} A species set $\taxa$ and a species tree $S$
for $\taxa$; a gene family $\genes$ of size $n$, a set $\genes_{i, 1 \leq i \leq
  k}$ of subsets of $\genes$ such that $\bigcup_{i=1}^{k}{\genes_i} = \genes$,
and a consistent set $\genetrees = \{(G_1,ev_1),
(G_2,ev_2), \cdots, (G_k,ev_k) \}$ of satisfiable and $S$-consistent 
labeled gene trees where, for each $1 \leq i \leq k$, $G_i$ is a tree
for $\genes_i$.\\
\noindent {\bf Output:} Among all labeled supergenetrees $(G, ev_G)$
for $\genes$ label-compatible with $\genetrees$, one of minimum
reconciliation cost.\\

The $MinSGT$ and $MinLSGT$ problems for the duplication cost were both shown NP-Hard  in~\cite{Lafond15}, even in the case where no two input trees have a gene in common and the trees only contain speciations.

%- - - - - - - - - - - - - - - - - - - - - - - - - - - - - - - - - - -
\subsection{The $MinSGT$ problem}
We describe a dynamic programming algorithm
for the $MinSGT$ problem leading to the following result.

\begin{theorem}\label{theorem:complexity-minsgt}
The $MinSGT$ problem can be solved in $O((n+1)^k \times 4^k \times k)$ time complexity.
\end{theorem}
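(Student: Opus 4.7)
My plan is to solve $MinSGT$ by dynamic programming over states that record, for every input tree, which of its subtrees is still to be processed at the current recursive call. Formally, a state is a $k$-tuple $(x_1, \ldots, x_k)$ where each $x_i$ is either a node of $G_i$ or a dummy symbol $\bot$; it represents the subproblem of computing a minimum LCA-reconciliation cost supergenetree for the leafset $\bigcup_{x_i \neq \bot} \L(G_i[x_i])$ that displays each non-empty $G_i[x_i]$. Since each $G_i$ contributes at most $n$ possible choices for $x_i$ plus the value $\bot$, there are at most $(n+1)^k$ such states; the base cases, in which the induced leafset has size at most one, are immediate.

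The recursion at a state decides, for each input tree $G_i$ simultaneously, how $G_i[x_i]$ is dispatched between the two subtrees of the root of the supergenetree under construction. Because the final tree must display $G_i$, the only admissible way to cut inside $G_i[x_i]$ is along the bipartition induced by $x_i$, leaving exactly four local options per tree: (a) $G_i[x_i]$ is placed entirely in the left subtree, (b) entirely in the right subtree, (c) split so that $G_i[x_{i,l}]$ goes left and $G_i[x_{i,r}]$ goes right, or (d) the mirrored split; leaves and $\bot$-entries admit fewer options but the bound of $4^k$ on the number of joint choices still holds. Each joint choice induces a left subproblem and a right subproblem, both of which are themselves valid states and, as long as we restrict to choices that properly partition the current leafset, strictly smaller, so the recursion terminates.

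The value at a state is then the minimum over valid joint choices of (left DP value) $+$ (right DP value) $+$ (local cost). The local cost is the LCA-reconciliation contribution at the new root: a $Dup$ or $Spec$ label determined by comparing the lca species of the two sides in $S$, plus the losses along the two incident edges computed via the classical linear-time formula; with constant-time lca queries on $S$, this takes $O(k)$ time per joint choice. The main subtlety in turning this sketch into a formal proof is the correctness argument, showing that the DP enumerates exactly the supergenetrees compatible with $\genetrees$: given any such supergenetree $G$, restricting its root bipartition to each $\L(G_i)$ must agree with one of (a)--(d) because $G$ displays $G_i$, and iterating this observation inside the two root subtrees of $G$ produces a sequence of DP choices whose total cost equals the LCA-reconciliation cost of $G$; conversely every such sequence reconstructs a valid supergenetree. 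Combining $(n+1)^k$ states, $4^k$ joint choices per state and $O(k)$ local work yields the announced $O((n+1)^k \times 4^k \times k)$ bound.
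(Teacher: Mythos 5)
Your proposal is correct and follows essentially the same route as the paper: a dynamic program whose states are tuples of (at most one) complete subtree per input tree (at most $(n+1)^k$ states), enumeration of the at most $4^k$ ways each tree dispatches its whole subtree or its two child subtrees between the left and right parts of the root bipartition (the paper's compatible bipartitions), and an $O(k)$ local LCA-reconciliation cost per choice after precomputing species mappings and lca/inter queries. The handling of invalid splits (a gene landing on both sides) and of empty parts also matches the paper's treatment, so there is nothing substantive to add.
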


The algorithm constructs the supergenetree $G$ from the root to the
leaves. At each step, i.e. for each internal node $x$ being
constructed in $G$, all possible bipartitions
$(\L(x_l),\L(x_r))$ that could be induced by $x$ are tried, 
and the iteration continues on each of $\L(x_l)$ and $\L(x_r)$.  
For example, at the root, the goal is to find the
best bipartition of $\genes$, i.e. the one leading to the minimum
LCA-reconciliation cost. At each step, this cost is computed from a {\em
  local reconciliation cost}\/ at $x$ (as defined in
Lemma~\ref{lem:reconciliation-cost}), and from the best reconciliation
cost of the two created clusters. A key observation is that the
constraint of being compatible with the input gene trees induces a
strong constraint on the bipartitions, hence only a subset of
the bipartition set has to be tested at each step.

%\noindent\textit {Algorithm for \textsc{MinSGT}:}
%Before describing the algorithm for \textsc{MinSGT}, 

%compute the reconciliation cost at a given node
%$x$, for a given bipartition $\L(x_l)$ and $\L(x_r)$ of $\L(x)$.

\begin{figure*}[!t]
\begin{center}
\includegraphics[width= 15cm]{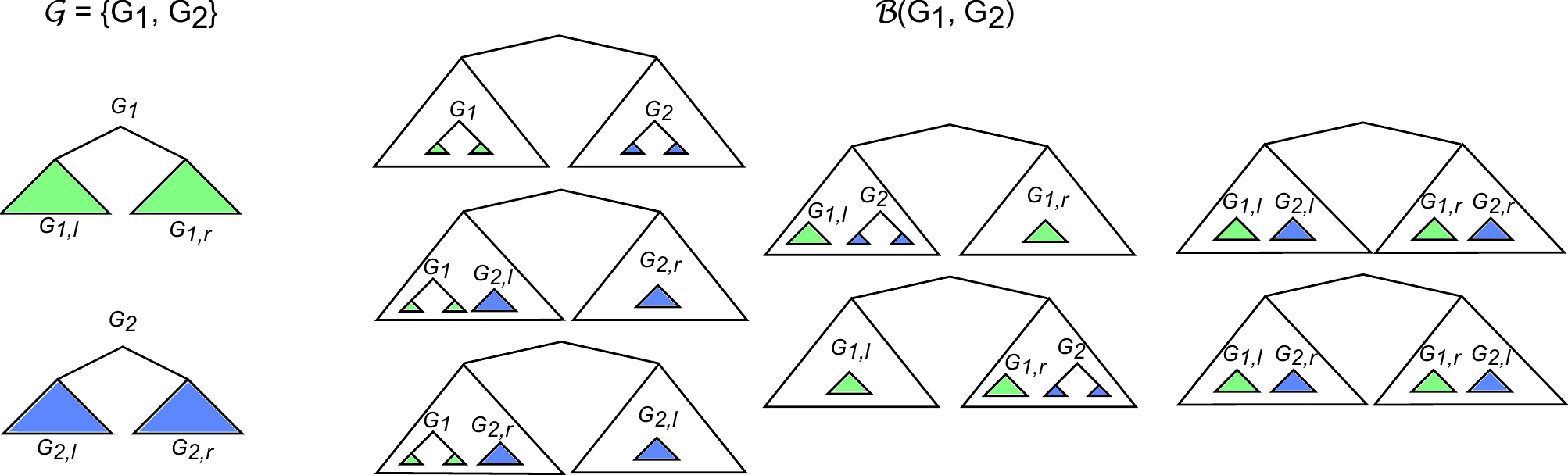}\hspace{1cm}
\caption{An illustration of the seven valid bipartitions for two trees $G_1$ and $G_2$.  Each bipartition is obtained by ``sending'' $L_1 \in \{\L(G_1), \L(G_{1, l}), \L(G_{1, r}), \emptyset\}$ in the left part, and the complement $\L(G_1) \setminus L_1$ in the right part.  The same process is then applied to $G_2$.  The set $\mathcal{B}(G_1, G_2)$ consists in the set of all possible combinations of choices, after eliminating symmetric cases and partitions with an empty side.  For each bipartition $(L_l, L_r)$ of $\mathcal{B}(G_1, G_2)$, an optimal solution is recursively computed for $L_l$ and $L_r$, the reconciliation cost is computed for 
the tree obtained by joining the roots of the two trees under a common parent, and the tree  yielding a minimum cost among all possibilities is returned.}\label{fig:bipartitions}
\end{center}
\end{figure*}

First, a formulation of the reconciliation cost in terms of the
sum of local reconciliation costs at each internal node $x$ is given.
%% A node $x$ of a gene tree $G$ with children $x_l$ and $x_r$ partitions the
%% set $\L(G_x)$ into {\em the bipartition $(L_{x_l},L_{x_r})$ of $x$},
%% or simply $x=(L_l,L_r)$ if no ambiguity on the considered node $x$,
%% with $L_l = \L(G_{x_l})$ and $L_r = \L(G_{x_r})$. We write
%% $x=(L_l,L_r)$. The labeling of $x$ as a duplication or speciation node
%% is deduced, by LCA-mapping, from this bipartition.
The next lemma is a
reformulation of the reconciliation cost, as described in many
papers~\cite{CHEN-JCOMPBIOL7,CHAUVE-ELMABROUK09}.

\begin{lemma}
  \label{lem:reconciliation-cost}
  The LCA-reconciliation cost of a gene tree $G$ is the sum of {\em
    local LCA-reconciliation costs}\/ $cost(L_l,L_r)$ for all internal
  nodes $x$ of $G$, where $L=\L(x)$, and $(L_l,L_r) =
  (\L(x_l),\L(x_r))$, and $cost(L_l,L_r)$ equals to:
\begin{itemize}
\item $inter(s(L),s(L_l)) +  inter(s(L),s(L_r))$\\ if $s(L) \neq s(L_l)$ and $s(L) \neq s(L_r)$;
\item $1 + inter(s(L),s(L_l)) +  inter(s(L),s(L_r))$\\ if $s(L)= s(L_l)$ and $s(L) = s(L_r)$;
\item $2 + inter(s(L),s(L_l)) + inter(s(L),s(L_r))$\\ if $s(L)= s(L_l)$
  and $s(L)\neq s(L_r)$ or conversely.
\end{itemize}

%Note that $inter(u,v) = 0$ if $u=v$ and by the LCA-reconciliation of $G$ with $S$
The node $x=(L_l,L_r)$ is a speciation node in the first case, and a
duplication node in the two last cases (thus adding $1$ duplication to
the LCA-reconciliation cost, plus $1$ loss in the third case).  Note
that $inter(s,t)=0$ if $s=t$.
\end{lemma}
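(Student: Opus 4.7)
The proof is essentially a verification that the three cases partition the LCA-reconciliation events correctly at each internal node. My plan is to (i) classify each internal node as a duplication or a speciation using the LCA-mapping, (ii) account for the loss events on the two child edges of that node using the standard per-edge loss formula, and (iii) check that in each of the three cases the sum (duplications contributed by $x$) $+$ (losses on the edges $(x, x_l)$ and $(x, x_r)$) equals the claimed $cost(L_l, L_r)$. Summing over all internal nodes then yields the total reconciliation cost.

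For step (i), recall that under the LCA-reconciliation, $x$ is labelled $Spec$ iff $s(x_l)$ and $s(x_r)$ are separated in $S$. A direct observation is that $s(x_l)$ and $s(x_r)$ are separated in $S$ precisely when $s(x) \ne s(x_l)$ and $s(x) \ne s(x_r)$ (since $s(x) = lca_S(s(x_l), s(x_r))$, and this lca coincides with one of the two children mappings exactly when one is an ancestor of the other in $S$). Thus case 1 corresponds to a speciation contributing $0$ duplications, while cases 2 and 3 correspond to duplications contributing $1$ duplication each.

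For step (ii), I would invoke the standard edge-based loss formula used in~\cite{CHEN-JCOMPBIOL7,CHAUVE-ELMABROUK09}: for any edge $(x,y)$ of $G$ with $y$ a child of $x$, the number of losses along that edge equals $d_S(s(x),s(y)) - 1$ if $x$ is a speciation and $d_S(s(x),s(y))$ if $x$ is a duplication, where $d_S$ denotes the number of edges between two nodes in $S$. Since $inter(a,b) = d_S(a,b) - 1$ whenever $a$ is a proper ancestor of $b$, and $inter(a,a)=0$, this reduces to $inter(s(x),s(y))$ in the speciation case (case 1) and to $inter(s(x),s(y)) + \mathbf{1}_{s(x) \ne s(y)}$ in the duplication cases (2 and 3). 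Adding the contributions of the two child edges and the duplication indicator gives exactly the three formulae of the lemma: in case 1 we get $inter(s(L),s(L_l)) + inter(s(L),s(L_r))$; in case 2 both indicators vanish and we get $1 + inter + inter$; in case 3 exactly one indicator equals $1$, producing the extra loss that brings the total to $2 + inter(s(L),s(L_l)) + inter(s(L),s(L_r))$.

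The main obstacle I foresee is purely bookkeeping: making sure the equivalence ``$s(x_l),s(x_r)$ separated iff $s(x) \notin \{s(x_l),s(x_r)\}$'' is stated cleanly, and tracking that the loss contribution on the ``non-moving'' edge in case 3 is $0$ while the other edge picks up the extra $+1$. Once those are in place, summing $cost(L_l,L_r)$ over all internal nodes of $G$ partitions every duplication and every loss of the reconciliation into exactly one summand, giving the total reconciliation cost and completing the proof.
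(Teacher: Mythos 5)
Your proof is correct: the case analysis (LCA labelling of $x$ as speciation exactly when $s(L)\neq s(L_l)$ and $s(L)\neq s(L_r)$, plus the per-edge loss count $d_S(s(x),s(y))-1$ for speciations and $d_S(s(x),s(y))$ for duplications, summed over the two child edges) reproduces the three formulae and partitions every duplication and loss over the internal nodes. The paper itself gives no proof, stating only that the lemma is a reformulation of the reconciliation cost and citing~\cite{CHEN-JCOMPBIOL7,CHAUVE-ELMABROUK09}; your argument is precisely the standard derivation those references contain, so it matches the intended approach.
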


For example, the root of $G$ in Figure~1
fulfills the conditions of the first case, and thus it is a speciation
node, whereas the root of $G'$ fulfills the condition of the third
case.

Lemma \ref{lem:reconciliation-cost} allows to recursively compute
%the reconciliation cost of a tree $G$
a minimum LCA-reconciliation cost supergenetree, by exploring, for
each node $x$ from the root to the leaves, all ``valid'' bipartitions
of $\L(x)$, remaining to be characterized formally. In the following,
we define the properties of a bipartition $(L_l,L_r)$ induced by the
root of a supergenetree $G$. It directly follows from the definition
of a supergenetree that should display each individual gene tree.
%For a tree $G$, we denote by $G_l$ and $G_r$ the trees rooted at
%the left and right child of the root of $G$.
%  $(L_1,L_2)$ of the set $\bigcup_{i=1}^{k}{\L(G_i)}$ that
%could be defined by the root of a SuperGeneTree $G$ compatible with
%$\genetrees$.

\begin{property}
  \label{prop:valid-bipartition}
  Let $\genetrees = \{G_1,\ldots,G_k\}$ be a set of gene trees.  The
  root of a supergenetree $G$ compatible with $\genetrees$ subdivides
  $\bigcup_{i=1}^{k}{\L(G_i)}$ into a {\em compatible bipartition}\/
  $(L_l,L_r)$, i.e. a bipartition such that, for each $i$ s.t. $1 \leq
  i \leq k$, either: 1) $\L(G_i) \subseteq L_l$; or 2) $\L(G_i)
  \subseteq L_r$; or 3) $\L(G_{i_l}) \subseteq L_l$ and
    $\L(G_{i_r}) \subseteq L_r$; or 4) $\L(G_{i_l}) \subseteq
        L_r$ and $\L(G_{i_r}) \subseteq L_l$.
  %% $\L(G_l)\cap\L(G_i)$ (or equivalently $\L(G_r)\cap\L(G_i)$)
  %% equals either $\emptyset$ or $\L(G_i)$ or $\L(G_{i_l})$ or
  %% $\L(G_{i_r})$ or $\L(G_{i_l})\cup \L(G_{i_r})$.
\end{property}

For example, the root of the supergenetree $G$ in
Figure~1 satisfies the third condition for
$G_1$, $G_2$ and $G_3$, and the second for $G_4$.
%From Property \ref{prop:valid-bipartition}, we define a bipartition
%of $\bigcup_{i=1}^{k}{\L(G_i)}$ compatible with $\genetrees$ as follows.
%
%\begin{definition}
%  \label{def:valid-bipartition}
%  Let $\genetrees = \{G_1,\ldots,G_k\}$ be a set of gene trees.  A
%  bipartition $(L_l,L_r)$ of $\bigcup_{i=1}^{k}{\L(G_i)}$ is
%  compatible with $\genetrees$ if for each $i$ s.t. $1 \leq i \leq k$,
%  either
%  1) $\L(G_i) \subseteq L_l$ or
%  2) $\L(G_i) \subseteq L_l$ or
%  3) $\L(G_{i_l}) \subseteq L_l$ and $\L(G_{i_r}) \subseteq L_r$ or
%  4) $\L(G_{i_l}) \subseteq L_r$ and $\L(G_{i_r}) \subseteq L_l$.
%  %% $L_l\cap\L(G_i)$ equals either $\emptyset$ or $\L(G_i)$ or
%  %% $\L(G_{i_l})$ or $\L(G_{i_r})$ or $\L(G_{i_l})\cup \L(G_{i_r})$. 
%\end{definition}

$\mathcal{B}(G_1,\ldots,G_k)$ denotes the set of all bipartitions
of $\bigcup_{i=1}^{k}{\L(G_i)}$ compatible with $\genetrees$. For
example, the two bipartitions defined by the roots of $G$ and $G'$ in
Figure~\ref{fig:supergenetree} are both compatible with the given set
of gene trees.
Figure~\ref{fig:bipartitions} illustrates the set of 
all valid bipartitions compatible with two given trees.

\begin{lemma}\label{lemme:count-partitions}
$|\mathcal{B}(G_1,\ldots,G_k)| \leq (\frac{4^k}{2})-1$.
\end{lemma}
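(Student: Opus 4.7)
The strategy is to encode each ordered compatible pair $(L_l,L_r)$ satisfying Property~\ref{prop:valid-bipartition} by a ``choice vector'' in $\{1,2,3,4\}^k$ (recording, for each $G_i$, which of the four cases applies), show that this encoding is injective, and then account for the two degenerate vectors that correspond to an empty side and for the $(L_l,L_r)\leftrightarrow(L_r,L_l)$ symmetry that collapses ordered pairs into unordered bipartitions.

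First, I would verify that, for each $G_i$ with two children, the four cases of Property~\ref{prop:valid-bipartition} are pairwise mutually exclusive: any two distinct cases would jointly force one of $\L(G_i)$, $\L(G_{i_l})$, or $\L(G_{i_r})$ to lie in $L_l\cap L_r=\emptyset$, contradicting the fact that subtree leafsets are non-empty. Hence, for every ordered compatible pair $(L_l,L_r)$ and every $i$, exactly one case applies, giving a well-defined vector $c(L_l,L_r)=(c_1,\ldots,c_k)\in\{1,2,3,4\}^k$. (If some $G_i$ is a single leaf only cases 1 and 2 apply, which only strengthens the bound.)

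Next I would argue that $c$ is injective. Let $x\in\bigcup_{i=1}^k\L(G_i)$ and pick any $i$ with $x\in\L(G_i)$; the value $c_i$ explicitly determines whether $x\in L_l$ or $x\in L_r$ (e.g.\ case $3$ sends $x$ to $L_l$ if $x\in\L(G_{i_l})$ and to $L_r$ otherwise). Thus $(L_l,L_r)$ is reconstructible from $(c_1,\ldots,c_k)$, and the number of ordered compatible pairs is at most $4^k$.

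Finally, I would peel off the invalid encodings and the symmetry. The vectors $(1,1,\ldots,1)$ and $(2,2,\ldots,2)$ encode the ``pairs'' with $L_r=\emptyset$ and $L_l=\emptyset$ respectively; neither is an actual bipartition, so at most $4^k-2$ ordered pairs have both sides non-empty. Swapping $L_l$ and $L_r$ corresponds to the coordinatewise involution on $\{1,2,3,4\}^k$ that exchanges $1\leftrightarrow 2$ and $3\leftrightarrow 4$; this involution is fixed-point free, so each unordered bipartition in $\mathcal{B}(G_1,\ldots,G_k)$ is the image of exactly two distinct ordered choice vectors. We conclude $|\mathcal{B}(G_1,\ldots,G_k)|\leq(4^k-2)/2=4^k/2-1$. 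I do not expect any real obstacle here; the only delicate step is the mutual-exclusivity check, and it reduces to a small case analysis relying on the standing assumption that every relevant leafset is non-empty.
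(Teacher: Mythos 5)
Your proof is correct and follows essentially the same route as the paper's: four placement choices per tree give $4^k$ assignments, the left/right swap symmetry halves this, and the empty-sided case is discarded, yielding $\frac{4^k}{2}-1$. Your version merely makes the paper's counting rigorous by phrasing it as an injective encoding (with the mutual-exclusivity and fixed-point-free checks), while the paper simply counts distributions and notes the result is only an upper bound since some assignments fail to be genuine bipartitions.
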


\begin{proof}
For each tree $G_i$, there are four possibilities for placing
$\L(G_{i_r})$ and $\L(G_{i_l})$ in a bipartition $(L_l,L_r)$: either
they are both in $L_l$, or both in $L_r$, or one in $L_l$ and the
other in $L_r$. Therefore, $4^k$ distributions of left and right
subtrees of the $k$ trees in $(L_l,L_r)$. However, as the left and
right characterization of nodes is arbitrary, each distribution is
counted twice, and thus the total number of different bipartitions is
$\frac{4^k}{2}$. One of these bipartitions has a part that is empty.
We discard it and the total number is then  $\frac{4^k}{2}-1$.
However, a set $(L_l,L_r)$ obtained
from such distribution of the $G_i$ subtrees is not necessarily a
bipartition, as a same gene can be present in two different input
trees, and end up placed in both $L_l$ and $L_r$. Therefore,
$(\frac{4^k}{2})-1$ is only an upper bound of the number of compatible
bipartitions.
\end{proof}

The constructive proof of Lemma~\ref{lemme:count-partitions} induces
an algorithm for enumerating the members of $\mathcal{B}(G_1,\ldots,G_k)$, which is illustrated in Figure~\ref{fig:bipartitions} for the case of two trees.
Intuitively, to construct a bipartition $(L_l, L_r)$, each tree $G_i$ of $\mathcal{G}$ can choose to ``send'' 
in $L_l$ either its left subtree $G_{i_l}$, its right subtree $G_{i_r}$, 
the whole tree $G_i$ or nothing at all.  
What has not been sent in $L_l$ is sent in $L_r$.
Then $\mathcal{B}(G_1, \ldots, G_k)$ is the set of all possible combinations of choices.
However, not every bipartition constructed in this manner
yields a valid bipartition.  For instance in Figure~\ref{fig:bipartitions}, the top-left bipartition 
cannot be valid if $G_1$ and $G_2$ share a leaf with the same label, as a gene cannot be sent both left and right.
These cases, however, can be detected easily by verifying 
the sizes of $L_l$ and $L_r$.

We are now ready to give the main recurrence formula of our dynamic
programming algorithm. Denote by $MinSGT(G_1,\ldots,G_k)$ the
minimum LCA-reconciliation cost of a supergenetree compatible with
$\genetrees =
\{G_1,\ldots,G_k\}$. The next lemma directly follows from Lemma~\ref{lem:reconciliation-cost} and Property~\ref{prop:valid-bipartition}.

\begin{lemma}\label{theorem:recurrences}
  Let $\genetrees = \{G_1,\ldots,G_k\}$ be a set of gene trees.

  \begin{enumerate}
%    \item (The set $\genetrees$ is not consistent) If $|~\bigcup_{i=1}^{k}{\L(G_i)}~| > 1$  and $|~\mathcal{B}(G_1,\ldots,G_k)~| = 0$, $$ MinSGT(G_1,\ldots,G_k) = +\infty$$
\item $MinSGT(G_1,\ldots,G_k) = 0$ if $|~\bigcup_{i=1}^{k}{\L(G_i)}~| = 1$ (Stop condition);
\item Otherwise, 
%If $|~\bigcup_{i=1}^{k}{\L(G_i)}~| > 1$  and $|~\mathcal{B}(G_1,\ldots,G_k)~| > 0$
  $ MinSGT(G_1,\ldots,G_k) =$
  \end{enumerate}
$$ \min_{(L_l,L_r)\in \mathcal{B}(G_1,\ldots,G_k)} 
  \left\{
  \begin{array}{l} cost(L_l,L_r) +\\  MinSGT(G_{1|L_l},\ldots,G_{k|L_l}) +\\  MinSGT(G_{1|L_r},\ldots,G_{k_|L_r})\end{array}
\right\}
$$
\end{lemma}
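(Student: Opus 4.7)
The plan is to establish the recurrence by proving two inequalities. The base case $|\bigcup_{i=1}^k \L(G_i)| = 1$ is immediate: any supergenetree on a single leaf has no internal node, so by Lemma \ref{lem:reconciliation-cost} the reconciliation cost (a sum indexed over internal nodes) is an empty sum equal to $0$.

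For the recursive case, I would first argue $MinSGT(G_1,\ldots,G_k) \leq$ RHS. Fix any compatible bipartition $(L_l,L_r) \in \mathcal{B}(G_1,\ldots,G_k)$ and let $G^l$ and $G^r$ be optimal supergenetrees for $\{G_{1|L_l},\ldots,G_{k|L_l}\}$ and $\{G_{1|L_r},\ldots,G_{k|L_r}\}$ respectively. Joining them as the two children of a new common root yields a tree $G$ on $\bigcup_i \L(G_i)$. Using the four-case structure of Property \ref{prop:valid-bipartition}, each $G_i$ is displayed by $G$: in cases (1)–(2), $G_i$ lies wholly inside one side and is displayed there; in cases (3)–(4), the root of $G_i$ corresponds exactly to the new root of $G$, while its two subtrees are displayed by $G^l$ and $G^r$. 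By Lemma \ref{lem:reconciliation-cost}, the total LCA-reconciliation cost of $G$ decomposes as $cost(L_l,L_r)$ plus the reconciliation costs of $G^l$ and $G^r$. Minimising over $(L_l,L_r)$ yields the inequality.

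For the reverse inequality, let $G^*$ be an optimal supergenetree for $\genetrees$ and let $(L_l^*,L_r^*)$ be the bipartition induced by its root. Property \ref{prop:valid-bipartition} guarantees $(L_l^*,L_r^*) \in \mathcal{B}(G_1,\ldots,G_k)$. The left subtree $G^*_l$ is a tree on $L_l^*$ which, because $G^*$ displays each $G_i$, must display each restriction $G_{i|L_l^*}$; hence its reconciliation cost is at least $MinSGT(G_{1|L_l^*},\ldots,G_{k|L_l^*})$, and symmetrically on the right. Applying Lemma \ref{lem:reconciliation-cost} at the root of $G^*$ shows that the cost of $G^*$ equals $cost(L_l^*,L_r^*)$ plus the costs of its two subtrees, which is therefore bounded below by the value of the bracketed expression at $(L_l^*,L_r^*)$ and in turn by the minimum over $\mathcal{B}(G_1,\ldots,G_k)$.

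The main obstacle is the bookkeeping around the claim that ``displays each $G_{i|L_l}$ on each side implies displays $G_i$ on the joined tree'', and conversely that the left/right subtrees of an optimal $G^*$ really induce the restricted instances $G_{i|L_l^*}, G_{i|L_r^*}$. Both points hinge on cases (3)–(4) of Property \ref{prop:valid-bipartition}, which force the bipartition at the root of $G$ to align with the bipartition at the root of $G_i$ whenever $G_i$ is split; only then does the recursion cleanly factor the displaying requirement, and only then does Lemma \ref{lem:reconciliation-cost} correctly additively decompose the cost.
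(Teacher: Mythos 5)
Your argument is correct and follows exactly the route the paper intends: the paper states that the recurrence ``directly follows'' from Lemma~\ref{lem:reconciliation-cost} and Property~\ref{prop:valid-bipartition} without writing out the details, and your two-inequality argument (building a supergenetree from optimal solutions on each side of a compatible bipartition, and conversely reading a compatible bipartition off the root of an optimal supergenetree) is precisely the standard verification of that claim. The only detail worth making explicit is the observation, noted in the paper right after the lemma, that for a compatible bipartition each restriction $G_{i|L_l}$ and $G_{i|L_r}$ is either empty, $G_i$, $G_{i_l}$ or $G_{i_r}$, which is what makes your displaying and cost-decomposition bookkeeping go through.
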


%\begin{proof}
%  The proof follows from Lemma~\ref{lem:reconciliation-cost} and Property~\ref{prop:valid-bipartition}. Let $G$ be a minimum LCA-reconciliation cost
%  supergenetree for $\genetrees$.
%  \begin{enumerate}
%%  \item If $|~\bigcup_{i=1}^{k}{\L(G_i)}~| > 1$  and $|~\mathcal{B}(G_1,\ldots,G_k)~| = 0$, then there exists no bipartition compatible with $\genetrees$ while
%%    a bipartition is needed for a set of genes of cardinality greater than $1$.
%   
%  \item If $|~\bigcup_{i=1}^{k}{\L(G_i)}~| = 1$, then the cardinality of the
%    set of genes equals $1$ and $G$ contains a single node. So, no bipartition
%    is needed and the LCA-reconciliation of $G$ equals $0$.
%  \item If $|~\bigcup_{i=1}^{k}{\L(G_i)}~| > 1$  and $|~\mathcal{B}(G_1,\ldots,G_k)~| > 0$, then the bipartition $(L_l,L_r)$ induced by the root of $G$ must be
%    one of the bipartitions in $\mathcal{B}(G_1,\ldots,G_k)$ and the
%    LCA-reconciliation cost of $G$ is given by the formula:
%    $cost(L_l,L_r) + MinSGT(G_{1|L_l},\ldots,G_{k|L_l}) +  MinSGT(G_{1|L_r},\ldots,G_{k_|L_r})$.
%    We then chose among the bipartitions in $\mathcal{B}(G_1,\ldots,G_k)$, one
%    that minimizes this formula.
%    
%  \end{enumerate}
%\end{proof}

Note that, given a bipartition $(L_l,L_r) \in \mathcal{B}(G_1,\ldots,G_k)$,
for each $i$ such that $1 \leq i \leq k$, $G_{i|L_l}$
and $G_{i|L_r}$ are equal either to $\emptyset$ or $G_i$ or $G_{i_l}$
or $G_{i_r}$. Thus, $G_{i|L_l}$ and $G_{i|L_r}$ are either empty trees
or complete subtrees of $G_i$.

Note also that, at each step, the existence of a compatible
bipartition follows from the fact that the input gene trees are
assumed to be consistent, as stated in the formulation of the $MinSGT$ problem. In the absence of this assumption, we have to
add a third equation to Lemma~\ref{theorem:recurrences}: If
$|~\bigcup_{i=1}^{k}{\L(G_i)}~| > 1$ and
$|~\mathcal{B}(G_1,\ldots,G_k)~| = 0$, $MinSGT(G_1,\ldots,G_k) =
+\infty$.
\\

\noindent\textit{Complexity.} We now address the complexity of the dynamic 
programming algorithm defined by the recurrences of
Lemma~\ref{theorem:recurrences}. Each call to the recursive
procedure $MinSGT$ receives as input at most one subtree from each
tree $G_i$. Let $n$ be the maximum number of node in a tree $G_i$.
As each tree has at most $n$ possible subtrees, there are at most
$(n+1)^k$ possible calls to $MinSGT$. Next, for any set of
gene trees $\{G_1,\ldots,G_k\}$, the number of distinct bipartitions
$(L_l,L_r) \in \mathcal{B}(G_1,\ldots,G_k)$ to be tested is at most
$\frac{4^k}{2}-1$ (Lemma~\ref{lemme:count-partitions}). Finally, the
value of $cost(L_l,L_r)$ can be computed in time $O(k)$ provided that
the mapping $s$ is precomputed for all nodes of the trees
$G_1,\ldots,G_k$, and $lca(x,y)$ and $inter(x,y)$ are precomputed for
any pair $(x,y)$ of nodes in $S$. The time complexity of the overall
algorithm is therefore $O((n+1)^k \times 4^k \times k)$, which completes the proof of Theorem~\ref{theorem:complexity-minsgt}.

%- - - - - - - - - - - - - - - - - - - - - - - - - - - - - - - - - - -
\subsection{The $MinLSGT$ Problem}
The algorithm for the $MinSGT$ problem can be adapted to solve
the $MinLSGT$ problem, leading to the following result.

\begin{corollary}\label{theorem:complexity-minlsgt-core}
The $MinLSGT$ problem can be solved in $O((n+1)^{k} \times 4^{k} \times k)$ time complexity.
\end{corollary}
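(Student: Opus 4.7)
The plan is to keep the dynamic-programming skeleton of the $MinSGT$ algorithm and enrich the per-bipartition local test so that it enforces label-compatibility in addition to the combinatorial compatibility already captured by $\mathcal{B}(G_1,\ldots,G_k)$. The search space (possible recursive calls on subtrees) and the set of candidate bipartitions do not change; only the acceptance criterion and the local cost at each constructed internal node do.

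The first step is to identify, at a recursive call on subtrees $G_1,\ldots,G_k$ with candidate bipartition $(L_l,L_r) \in \mathcal{B}(G_1,\ldots,G_k)$, which input tree roots are \emph{split} by the bipartition, i.e. those $G_i$ falling into cases (3) or (4) of Property~\ref{prop:valid-bipartition}. By construction, the new node $x$ being built is the lca in the supergenetree $G$ of $\L(G_i)$ for every such $G_i$, so label-compatibility forces $ev_G(x) = ev_i(r(G_i))$ for all split $G_i$. Hence the bipartition is label-compatible only if all split input roots share a common label, which is then imposed on $x$; otherwise the bipartition is rejected.

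Once the label of $x$ is determined (or, if no input root is split, both labels are tried and the minimum is kept), I would replace the formula of Lemma~\ref{lem:reconciliation-cost} by a label-aware local cost: if $x$ is forced to be $Spec$, then the assignment is accepted only if $s(L_l)$ and $s(L_r)$ are separated in $S$ (otherwise the tree would not be $S$-consistent and the cost is set to $+\infty$), and the cost equals $inter(s(L),s(L_l))+inter(s(L),s(L_r))$; if $x$ is forced to be $Dup$, the cost is $1$ plus the corresponding loss count obtained from the $inter$ terms, exactly as in the duplication cases of Lemma~\ref{lem:reconciliation-cost}. Plugging this modified local cost into the recurrence of Lemma~\ref{theorem:recurrences} yields a recurrence for $MinLSGT$. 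Correctness follows by the same induction as for $MinSGT$: any label-compatible supergenetree of minimum reconciliation cost induces at its root a bipartition in $\mathcal{B}$ together with a root label that is compatible with every split input root, and the recursion on $(L_l,L_r)$ explores exactly such choices.

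For the complexity, the point is that none of the added checks change the asymptotics. The number of distinct subproblems is still at most $(n+1)^k$, since each recursive call still receives either $\emptyset$ or a complete subtree of each original $G_i$; the number of bipartitions to test per call is still bounded by $4^k/2 - 1$ by Lemma~\ref{lemme:count-partitions}; and collecting the split input roots, verifying that they share a common label, and evaluating the modified $cost(L_l,L_r)$ all take $O(k)$ time once the species mapping $s$ and the $lca$/$inter$ values on $S$ are precomputed. The total running time is therefore $O((n+1)^k \times 4^k \times k)$, as claimed. The main subtlety, rather than an obstacle, is the careful handling of unconstrained nodes (no split input root): both possible labels must be tried and the $Spec$ alternative filtered by the $S$-consistency condition on $(s(L_l),s(L_r))$, which is exactly the mechanism already present in Lemma~\ref{lem:reconciliation-cost}.
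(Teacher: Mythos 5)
Your proposal is correct and follows essentially the same route as the paper: the same dynamic program, with bipartitions filtered by requiring all split input-tree roots to agree on a label (and, when that forced label is $Spec$, that $s(L_l)$ and $s(L_r)$ be separated in $S$, which is the paper's second label-compatibility condition), and with a label-aware local cost in place of the LCA-based one of Lemma~\ref{lem:reconciliation-cost}, so the count of subproblems, bipartitions and the $O(k)$ per-bipartition work are unchanged. One small imprecision: when the node is forced to be $Dup$ while $s(L)\neq s(L_l)$ and $s(L)\neq s(L_r)$, this is not literally one of the duplication cases of Lemma~\ref{lem:reconciliation-cost}; the correct cost is $3 + inter(s(L),s(L_l)) + inter(s(L),s(L_r))$ (one duplication plus two additional losses, the paper's Lemma~\ref{lem:reconciliation-cost2}), which your ``$1$ plus the loss count'' reading does yield, but the appeal to Lemma~\ref{lem:reconciliation-cost}'s cases as stated does not.
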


The intuition behind the $MinLSGT$ algorithm is quite simple.
We proceed as in the $MinSGT$ algorithm, but each time a
bipartition $(L_l, L_r)$ is considered, we verify whether the root of a tree separating $L_l$ and $L_r$ should be
a  speciation or a duplication.  If there are two genes 
$g_l \in L_l$ and $g_r \in L_r$ that disagree with this 
event, we treat $(L_l, L_r)$ as an invalid bipartition
and do not consider it further.

Before describing this adaptation, we
need few additional definitions and properties. Given a set of
labeled gene trees $\genetrees =
\{(G_1,ev_{G_1}),\ldots,(G_k,ev_{G_k})\}$ and a bipartition
$(L_l,L_r)\in \mathcal{B}(G_1,\ldots,G_k)$, for any $i$ s.t. $1 \leq i
\leq k$, we say that $G_i$ is {\em separated}\/ by $(L_l,L_r)$ iff
%%either 1) $\L(G_{i_l})\subseteq L_l$ and $\L(G_{i_r})\subseteq L_r$ or
%%2) $\L(G_{i_l})\subseteq L_r$ and $\L(G_{i_r})\subseteq L_l$.
$G_i$ satisfies the third or fourth condition of Property
\ref{prop:valid-bipartition}.
We denote by $\genetrees(L_l,L_r)$ the set of gene trees $G_i$,
$1 \leq i \leq k$, that are separated by $(L_l,L_r)$.

\begin{lemma}
\label{prop:valid-bipartition-labeled}
Let $\genetrees = \{(G_1,ev_{G_1}),\ldots,(G_k,ev_{G_k})\}$ be a set
of labeled gene trees. Then, for any labeled supergenetree $(G,ev_G)$
label-compatible with $\genetrees$, the label $ev_G(x)$ of its root
$x$ equals the label of the root of any gene tree $G_i$, $1 \leq i
\leq k$, such that $G_i\in \genetrees(\L(x_l),\L(x_r))$.
%separated by $(\L(x_l), \L(x_r))$.
\end{lemma}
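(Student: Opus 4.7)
The plan is to unfold the definition of label-compatibility and apply it with $x' = r(G_i)$. Recall that $(G, ev_G)$ being label-compatible with $(G_i, ev_{G_i})$ means that for every internal node $x$ of $G$ and every internal node $x'$ of $G_i$ with $x = lca_G(\L(x'))$, the labels agree: $ev_G(x) = ev_{G_i}(x')$. So it suffices to verify that the root $x$ of $G$ is in fact $lca_G(\L(G_i))$ whenever $G_i \in \genetrees(\L(x_l), \L(x_r))$.

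First, I would observe that by the hypothesis $G_i \in \genetrees(\L(x_l), \L(x_r))$, the tree $G_i$ is separated by the bipartition $(\L(x_l), \L(x_r))$, which by the definition of separation means that $G_i$ satisfies either the third or the fourth condition of Property~\ref{prop:valid-bipartition}. In both cases, one of $\L(G_{i_l}), \L(G_{i_r})$ is contained in $\L(x_l)$ and the other is contained in $\L(x_r)$. In particular, $\L(G_i) = \L(G_{i_l}) \cup \L(G_{i_r})$ has a non-empty intersection with both $\L(x_l)$ and $\L(x_r)$.

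Next, since leaves of $G_i$ appear on both sides of $x$, no proper descendant of $x$ can be an ancestor of every leaf in $\L(G_i)$: a descendant of $x_l$ misses the leaves of $G_i$ lying in $\L(x_r)$, and symmetrically for $x_r$. Hence $lca_G(\L(G_i)) = x$. Applying the label-compatibility condition with $x' = r(G_i)$, we get $ev_G(x) = ev_{G_i}(r(G_i))$, which is exactly the claim.

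The only subtlety, and the step I would be most careful about, is ensuring that this argument applies uniformly regardless of which case (third or fourth) of Property~\ref{prop:valid-bipartition} holds — but in both cases the crucial consequence is the same, namely that $\L(G_i)$ straddles the bipartition. Once that is spelled out, the rest is a direct invocation of the definition.
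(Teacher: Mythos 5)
Your proof is correct and follows essentially the same route as the paper's: it identifies the root $x$ of $G$ as $lca_G(\L(G_i))$ for any $G_i$ separated by $(\L(x_l),\L(x_r))$ and then invokes the definition of label-compatibility with $x' = r(G_i)$. The only difference is that you explicitly justify the lca claim (leaves of $G_i$ straddle both sides of $x$, so no proper descendant works), which the paper simply asserts.
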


\begin{proof}
  Let $G_i$, $1 \leq i \leq k$ be a genetree such that
  $G_i \in \genetrees(\L(x_l), \L(x_r))$, and let $x_i$ be the root of $G_i$.
  Then $lca_G(\L(x_i)) = x$, and thus by definition of the label-compatibility
  of $G$ with $G_i$, we have $ev_G(x) = ev_{G_i}(x_i)$.
\end{proof}

From Lemma \ref{prop:valid-bipartition-labeled}, we define a
bipartition of $\bigcup_{i=1}^{k}{\L(G_i)}$ label-compatible with
$\genetrees$ as follows.

\begin{definition}
  \label{def:label-compatible}
  Let $\genetrees = \{(G_1,ev_{G_1}),\ldots,(G_k,ev_{G_k})\}$ be a set
  of labeled gene trees.
  %, and $(L_l,L_r)$ be a bipartition of $\bigcup_{i=1}^{k}{\L(G_i)}$.
  A bipartition $(L_l,L_r)$ of $\bigcup_{i=1}^{k}{\L(G_i)}$ is {\em
    label-compatible}\/ with $\genetrees$ if it is compatible with
  $\genetrees$ and verifies:
\begin{enumerate}
\item  if $|~\genetrees(L_l,L_r)~| > 0$, the roots of all gene trees in
  $\genetrees(L_l,L_r)$ have the same label denoted by $ev_{\genetrees(L_l,L_r)}$.
\item if $|~\genetrees(L_l,L_r)~| > 0$ and $ev_{\genetrees(L_l,L_r)} = Spec$,
  then $lca_S(\{s(x) : x \in L_l \cup L_r \}) \neq lca_S(\{s(x) : x \in L_l \})$
  and $lca_S(\{s(x) : x \in L_l \cup L_r \}) \neq lca_S(\{s(x) : x \in L_r \})$.
\end{enumerate}
\end{definition}

For example, the bipartition determined by the root of the
supergenetree $G$ $(\{s_1, s_2, b_1, b_2\},\{h_1, h_2, h_3, m_3,
r_3\})$ in Figure~1 is not label-compatible with
the set of gene trees, as it separates both $G_1$ and $G_2$ which do
not have the same root label.

The $MinLSGT$ algorithm for solving the $MinSGT$ problem is
based on the same general dynamic programming framework as the
$MinSGT$ algorithm: at each step, iterate over all possible
bipartitions, and then proceed recursively for each partition. The two
differences are: (1) given a set of labeled gene trees
$\genetrees=\{(G_1, ev_1),\ldots,(G_k, ev_k)\}$, we only test a subset
of compatible bipartitions of $\mathcal{B}(G_1,\ldots,G_k)$ that are
label-compatible with $\genetrees$; (2) computing local reconciliation costs should not
be done on the basis of the LCA-reconciliation, as some nodes that
would be labeled as speciation nodes from the LCA-mapping should
rather be duplication nodes in order to be label-compatible with some
input gene trees. For example, in Figure~1,
$lca_{G'}(\{s_2,b_2,h_2\})$ would be labeled $Spec$ by the
LCA-mapping. However, it should be labeled $Dup$ to be
label-compatible with $G_3$. The following Lemma is
required, in place of Lemma~\ref{lem:reconciliation-cost}.

%describing the computation of the local reconciliation cost allowing
%to iterate over bipartitions should be adapted is adapted in order to
%account for bipartitions $(L_l,L_r)$ such that
%$ev_{\genetrees}(L_l,L_r) = Dup$ while $s(L) \neq s(L_l)$ and $s(L)
%\neq s(L_r)$, where $L = L_l\cup L_r$.
%
%\\ $cost_{(G,ev_G)}(L_l,L_r) = 3 + inter(s(L),s(L_l)) +
%inter(s(L),s(L_r)$ if $s(L) \neq s(L_l)$ and $s(L) \neq s(L_r)$ and
%$|~\genetrees(L_l,L_r)~| > 0$ and $ev_G(x) = Dup$. Otherwise,
%$cost_{(G,ev_G)}(L_l,L_r) = cost(L_l,L_r)$ as defined in Lemma
%\ref{lem:reconciliation-cost}.

\begin{lemma}
 \label{lem:reconciliation-cost2}  
Let $x$ be an internal node of a labeled supergenetree $(G,ev_G)$, $L= \L(x)$ and $(L_l,L_r)=(\L(x_l),\L(x_r))$.  The local reconciliation cost of $x$, $cost_{\genetrees}(L_l,L_r)$ is equal to:

\begin{itemize}
\item $3 + inter(s(L),s(L_l)) +  inter(s(L),s(L_r))$ if $s(L) \neq s(L_l)$, $s(L) \neq s(L_r)$, $|~\genetrees(L_l,L_r)~| > 0$  and $ev_{\genetrees}(L_l,L_r) = Dup$;
\item $cost(L_l,L_r)$ Otherwise.
\end{itemize}

In the first case, the node $x$ is a duplication node adding $1$ duplication
plus at least $2$ losses to the reconciliation cost, and in the second case
the local reconciliation cost is computed as for the LCA-reconciliation.
\end{lemma}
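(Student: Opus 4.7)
The plan is to argue by case analysis, treating this lemma as a refinement of Lemma~\ref{lem:reconciliation-cost} in the single situation where the label forced by $\genetrees$ differs from the LCA-label. First I would note that, by Lemma~\ref{lem:reconciliation-cost}, an LCA-reconciliation labels $x$ as $Spec$ precisely when $s(L) \neq s(L_l)$ and $s(L) \neq s(L_r)$, and as $Dup$ otherwise. Hence, when $|\genetrees(L_l,L_r)|=0$, there is no external constraint and the local cost is simply $cost(L_l,L_r)$, matching the "otherwise" branch.

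Next I would handle the case $|\genetrees(L_l,L_r)|>0$ with $ev_{\genetrees(L_l,L_r)}=Spec$. Here condition~(2) of Definition~\ref{def:label-compatible} guarantees that $s(L)\neq s(L_l)$ and $s(L)\neq s(L_r)$, so the LCA would also label $x$ as a speciation and Lemma~\ref{lem:reconciliation-cost} already gives the right cost, again falling into the "otherwise" branch. For the remaining case $ev_{\genetrees(L_l,L_r)}=Dup$, if $s(L)=s(L_l)$ or $s(L)=s(L_r)$, then the LCA label is also $Dup$ (case~2 or~3 of Lemma~\ref{lem:reconciliation-cost}), and the cost coincides with $cost(L_l,L_r)$. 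The only situation where $cost_{\genetrees}$ departs from $cost$ is thus exactly the hypothesis of the first bullet: a forced duplication where LCA would have inferred a speciation.

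The heart of the argument is therefore to compute the local cost when $x$ must be a duplication while $s(L)$ strictly dominates both $s(L_l)$ and $s(L_r)$ in $S$. I would invoke the same standard reconciliation counting used to derive Lemma~\ref{lem:reconciliation-cost}: a duplication contributes $1$ to the cost, and along a branch descending from a duplication node at $s(L)$ to a child mapped to a strict descendant $s(L_c)$ of $s(L)$ in $S$, one loss is charged at the speciation occurring at $s(L)$ itself (the sibling lineage is lost), plus one loss at each of the $inter(s(L),s(L_c))$ intermediate nodes on the $S$-path, totalling $1+inter(s(L),s(L_c))$ losses. Summing the duplication and the losses from both branches yields $1+\bigl(1+inter(s(L),s(L_l))\bigr)+\bigl(1+inter(s(L),s(L_r))\bigr)=3+inter(s(L),s(L_l))+inter(s(L),s(L_r))$, which is the claimed formula, and the "at least 2 losses" remark follows since $inter\ge 0$ on each branch.

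The main obstacle is making the loss-counting step airtight: one needs to check that it agrees with the existing formulae of Lemma~\ref{lem:reconciliation-cost} (in particular, case~3 there, $2+inter(s(L),s(L_r))$, is recovered as $1$ duplication plus $0$ losses on the $s(L)=s(L_l)$ side plus $1+inter(s(L),s(L_r))$ losses on the other side, so the "$1+inter$" rule is consistent). Once this is verified, the other three sub-cases collapse into "$cost_{\genetrees}=cost$" by direct inspection and the lemma follows.
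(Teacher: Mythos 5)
Your proposal is correct and takes essentially the same route as the paper, which states this lemma without a formal proof beyond the closing remark: the only case where the cost departs from Lemma~\ref{lem:reconciliation-cost} is a forced duplication where the LCA-mapping would infer a speciation, and there the standard loss counting gives $1$ duplication plus $1+inter(s(L),s(L_l))$ and $1+inter(s(L),s(L_r))$ losses on the two branches, i.e.\ $3+inter(s(L),s(L_l))+inter(s(L),s(L_r))$. Your case analysis (including the use of condition~2 of Definition~\ref{def:label-compatible} for the forced-speciation case and the consistency check against case~3 of Lemma~\ref{lem:reconciliation-cost}) fills in precisely the details the paper leaves implicit.
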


The complexity of the $MinLSGT$ algorithm remains in $O((n+1)^k \times 4^k
\times k)$
%. There are at most $n^k$ possible calls to the $MinLSGT$
%procedure.  For each call, the number of label-compatible bipartitions
%to be tested is at most $\frac{4^k}{2}-1$. Finally the value of
%$cost_{\genetrees}(L_l,L_r)$ can be computed in time $O(k)$ 
provided that 
the sets of label-compatible bipartitions $(L_l,L_r)$ are constructed
simultaneously with the sets $\genetrees(L_l,L_r)$.

\subsection{Improved complexity from a core set of trees}
Last, we show that the principles underlying the two algorithms described above can be improved to reduce the dependency in $k$.  
%parameterized 
%complexity of the algorithms, that is high in the case of a large value of $k$. 
The key remark is that all bipartitions to consider can be identified by considering only a subset of the input trees provided they span the set of all genes of $\genes$. 
%We present the improvement for the $MinSGT$ algorithm, but it applies as is to the $MinLSGT$ algorithm.

Call $\genetrees' \subseteq \genetrees$ a \emph{core}  of $\genetrees$ if $\bigcup_{G \in \genetrees'} \L(G) = \genes$.
%For a bipartition $B = (L_l, L_r)$ of $\genes$, 
%we say that $B$ \emph{agrees with}
%$\genetrees$ if, for every $G \in \genetrees$ with child subtrees $G_{\l}$ and $G_r$ either $\L(G)$ is included in one part of $B$ or $L_l, L_r$ are included each in one part of $B$. 
We introduce the following modified $MinSGT$ algorithm, that we call $MinSGT\textnormal{-}core$:
\begin{enumerate}
\item Find a core $\genetrees' = \{G'_1, \ldots, G'_{\l}\}$ of $\genetrees$.
\item Apply the $MinSGT$ algorithm on $\genetrees'$, with the exception that, when considering a bipartition $B = (L_l, L_r)$ compatible with $\genetrees'$:\\
$\bullet$ Verify that $B$ is also compatible with $\genetrees \setminus \genetrees'$.  If not, then do not proceed recursively on $B$;\\
$\bullet$ Compute $cost(L_l,L_r)$ on the whole set $\genetrees$.
\end{enumerate}

\begin{theorem}\label{theorem:complexity-minsgt-core}
Let $\genetrees'$ be a core of $\genetrees$ composed of $k'$ trees. 
The $MinSGT$ problem can be solved in $O((n+1)^{k'} \times 4^{k'} \times k)$ time complexity.
\end{theorem}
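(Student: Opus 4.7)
The plan is to argue correctness and complexity for the $MinSGT\textnormal{-}core$ algorithm separately.

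For correctness, the key observation is that the set of supergenetrees compatible with $\genetrees$ is unchanged by the algorithm. Since $\genetrees' \subseteq \genetrees$, any bipartition induced by an internal node of a supergenetree compatible with $\genetrees$ is in particular a member of $\mathcal{B}(G'_1, \ldots, G'_{\l})$ (Property~\ref{prop:valid-bipartition}). Conversely, a bipartition in $\mathcal{B}(G'_1, \ldots, G'_{\l})$ that is also compatible with $\genetrees \setminus \genetrees'$ is, by definition, in $\mathcal{B}(G_1, \ldots, G_k)$. Hence enumerating the bipartitions of $\mathcal{B}(G'_1, \ldots, G'_{\l})$ and filtering those that are not compatible with $\genetrees \setminus \genetrees'$ yields exactly the bipartitions of $\mathcal{B}(G_1, \ldots, G_k)$. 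Because the local cost $cost(L_l, L_r)$ depends only on the leaves of $L_l$ and $L_r$ and on the species tree $S$ (Lemma~\ref{lem:reconciliation-cost}), computing it on the whole set $\genetrees$ preserves the recurrence of Lemma~\ref{theorem:recurrences}.

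A subtle point to verify is that the core property is preserved through the recursion. This is immediate: if $\bigcup_{G' \in \genetrees'} \L(G') = \genes$, then for any bipartition $(L_l,L_r)$ of $\genes$ explored by the algorithm, $\bigcup_{G' \in \genetrees'} \L(G'_{|L_l}) = \genes \cap L_l = L_l$, and similarly for $L_r$. Thus the subtrees in $\{G'_{i|L_l}\}_{i=1}^{\l}$ again form a core of $\{G_{i|L_l}\}_{i=1}^{k}$, so the bipartition enumeration at the recursive level is also exhaustive.

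For complexity, the analysis of Theorem~\ref{theorem:complexity-minsgt} transfers with $k$ replaced by $k'$ in the factors counting distinct subproblems and bipartitions, while a factor $k$ remains in the cost-computation step because both the local reconciliation cost and the compatibility check with $\genetrees \setminus \genetrees'$ involve all $k$ trees. Specifically, each call receives at most one subtree (or the empty tree) from each of the $k'$ core trees, yielding at most $(n+1)^{k'}$ distinct DP subproblems; each subproblem enumerates at most $\frac{4^{k'}}{2}-1$ bipartitions (Lemma~\ref{lemme:count-partitions} applied to $\genetrees'$); and for each bipartition, both the compatibility check against the $k-k'$ non-core trees and the evaluation of $cost(L_l,L_r)$ take $O(k)$ time using the same $lca$, $inter$, and $s$ preprocessing as in Theorem~\ref{theorem:complexity-minsgt}. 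Multiplying, the total running time is $O((n+1)^{k'} \times 4^{k'} \times k)$.

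The main obstacle is the bookkeeping required to check compatibility of a bipartition with the non-core trees in $O(k)$ time. For each non-core $G_i$ one must determine, given the current $(L_l, L_r)$, whether $\L(G_{i,l})$ and $\L(G_{i,r})$ fall into $L_l$, $L_r$, or are split in one of the two allowed ways of Property~\ref{prop:valid-bipartition}. By carrying, alongside the core restrictions, the restricted subtree $G_{i|L}$ for every non-core $i$ (updated on the fly when the recursion branches into $L_l$ and $L_r$), this test reduces to comparing four pointer-equalities per non-core tree and thus costs $O(k)$ per bipartition; crucially, this extra information is determined by $L$ and does not enlarge the DP state space beyond $(n+1)^{k'}$.
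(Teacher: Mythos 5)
Your proposal is correct and follows essentially the same route as the paper: you show that the bipartitions enumerated from the core and then filtered for compatibility with $\genetrees \setminus \genetrees'$ are exactly the bipartitions compatible with the full set $\genetrees$ at every step of the recursion, and then transfer the complexity analysis of Theorem~\ref{theorem:complexity-minsgt} with $k'$ in the exponential factors and a linear factor $k$ for the per-bipartition work. Your added observations (that the core property is preserved under restriction to $L_l$ and $L_r$, and the bookkeeping for the compatibility check) only make explicit details the paper leaves implicit.
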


\begin{proof}
The difference between the executions of a call of the $MinSGT\textnormal{-}core$ algorithm on the input $\genetrees'$ and a call of the $MinSGT$ on $\genetrees$ lies in the set of bipartitions
considered at each step of the recursion. At a given step of the recursion, let ${\cal B}'$ and ${\cal B}$ be the set of bipartitions compatible with  $\genetrees'$ and $\genetrees$ respectively, and let ${\cal B}^*$ be the set of bipartitions considered by $MinSGT\textnormal{-}core$. We show that ${\cal B} = {\cal B}^*$.

% We use induction on $|\L|$ and consider the first bipartition found by both the $MinSGT$ algorithms using all input trees and the $MinSGT\textnormal{-}core$ algorithm.

% If $|\L| = 1$, then the statement
% is trivial.  Otherwise, let $\cal B$ be the set of bipartitions
% considered by the $MinSGT$ algorithm at the first level of the recursion
% (i.e. the bipartitions that agree with $\genetrees$).  Let ${\cal B}'$ be the set of bipartitions agreeing with the core $\genetrees'$, and let 
% ${\cal B}^*$ be the set of bipartitions considered by $MinSGT\textnormal{-}core$ at the first level of the recursion (i.e. the bipartitions obtained by filtering out elements of ${\cal B}'$ not agreeing with $\genetrees \setminus \genetrees'$).
% Since the core $\genetrees'$ covers every leaf of $\L$, 
% the bipartitions of ${\cal B}^*$ must also partition $\L$.

 Clearly ${\cal B} \subseteq {\cal B}'$, as a bipartition compatible with $\genetrees$ is also compatible with $\genetrees' \subseteq \genetrees$.  
 Suppose that there is some $(L_l, L_r) \in {\cal B}$ such that 
 $(L_l, L_r) \notin {\cal B}^*$.  This implies that $(L_l, L_r)$ 
 was filtered out of ${\cal B}'$, meaning that it is not compatible with some tree $G \in \genetrees \setminus \genetrees'$.  Therefore $(L_l, L_r)$ cannot be in ${\cal B}$, a contradiction.  We deduce that ${\cal B} \subseteq {\cal B}^*$.
To see that ${\cal B}^* \subseteq {\cal B}$, observe that ${\cal B}^*$ contains only bipartitions compatible with $\genetrees' \cup (\genetrees \setminus \genetrees') = \genetrees$, and that ${\cal B}$ contains every such bipartition.
So both algorithms consider the same set of bipartitions at each step, and lead to the same solution.
\end{proof}

It thus remain to describe how to find a core $\genetrees'$, as small as possible, as the size of the core is now the main complexity parameter. This problem is equivalent to the {\sc Minimum Set Cover Problem} known to be NP-hard. However, a natural heuristic is the following: choose a gene tree $G_i$ with the largest subset of $\genes$ as leafset, say of size $n-p$, and ``complete'' it with at most $p$ additional gene trees from $\genetrees$ each containg at least one of the missing genes. This obviously provide a core, leading to the following result.

\begin{corollary}\label{cor:complexity-minsgt-core}
The $MinSGT$ problem can be solved in $O((n+1)^{p+1} \times 4^{p+1} \times k)$ time complexity, where $p$ is the smallest integer such that a gene tree of $\genetrees$ contains $n-p$ genes.
\end{corollary}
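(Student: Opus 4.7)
The plan is to directly apply Theorem~\ref{theorem:complexity-minsgt-core} with a carefully constructed small core, whose existence follows from a simple greedy argument. Since $k'$, the size of the core, dominates the running time $O((n+1)^{k'} \times 4^{k'} \times k)$, the goal is to exhibit a core of size at most $p+1$, where $p$ is the smallest integer such that some $G_i \in \genetrees$ has $n-p$ leaves.

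First I would pick a tree $G_{i^*} \in \genetrees$ attaining the maximum leafset size, so that $|\L(G_{i^*})| = n-p$ by the definition of $p$. Let $M = \genes \setminus \L(G_{i^*})$ be the set of genes not covered by $G_{i^*}$; by construction $|M| \le p$. For each gene $g \in M$, the input condition $\bigcup_{i=1}^k \L(G_i) = \genes$ guarantees the existence of some $G_{i(g)} \in \genetrees$ with $g \in \L(G_{i(g)})$. Adding one such tree for every $g \in M$ to $\{G_{i^*}\}$ yields a set $\genetrees' \subseteq \genetrees$ with $|\genetrees'| \le 1 + |M| \le p+1$, and by construction $\bigcup_{G \in \genetrees'} \L(G) \supseteq \L(G_{i^*}) \cup M = \genes$, so $\genetrees'$ is a core of $\genetrees$.

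Applying Theorem~\ref{theorem:complexity-minsgt-core} with $k' = |\genetrees'| \le p+1$ then yields a running time of $O((n+1)^{p+1} \times 4^{p+1} \times k)$, as claimed. No step here is really a technical obstacle: the only thing to verify is that the greedy construction indeed produces a core of the stated size, which is immediate from the covering assumption of the $MinSGT$ input. The construction itself can also be performed in time negligible compared to the dynamic programming (e.g., in $O(nk)$ by scanning the leafsets once), so it does not affect the stated complexity.
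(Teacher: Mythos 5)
Your proposal is correct and follows essentially the same route as the paper: take the tree with the largest leafset (of size $n-p$), complete it with at most $p$ additional trees covering the missing genes to obtain a core of size at most $p+1$, and invoke Theorem~\ref{theorem:complexity-minsgt-core}. The paper presents this as the same greedy "completion" heuristic, so there is no substantive difference.
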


The same technique applies to $MinLSGT$ and the same result could be stated for this problem.

% -------------------------------------------------------------------
\section{Triplet Respecting Supergenetrees}\label{triplets}

We now consider a problem related to the correction of a gene
tree. Assume that input gene trees $G_1, G_2, \cdots, G_k$ are
separated subtrees (i.e. leaf-disjoint) of a given gene tree
$G^{Init}$. The $MinSGT$ and $MinLSGT$ problems can also be considered
in this context to infer an alternative gene tree displaying them all
and minimizing a reconciliation cost. However, this may lead to a
new tree exhibiting a complete reorganization of the input subtrees and possibly grouping genes that were far apart in the initial tree. Therefore, assume in addition that we trust the hierarchy of
upper branches. Then we ask for a supergenetree of minimum
reconciliation cost which preserves the phylogenetic relation between
subtrees, as given by $G^{Init}$. Formally, we seek for a triplet
respecting supergenetree, as defined bellow.

\begin{definition}
  Let $\genetrees = \{G_1, G_2, \cdots, G_k \}$ be a set of separated
  subtrees of a gene tree $G^{Init}$ for $\genes$ such that
  $\bigcup_{i=1}^{k}{\L(G_i)} = \genes$. A tree $G^{TR}$ compatible
  with $\genetrees$ is {\em triplet respecting}\/ iff, for any triplet
  of trees $\L(G_{i_1})$, $\L(G_{i_2})$ and $\L(G_{i_3})$ in
  $\genetrees$ and any triplet of genes $x \in G_{i_1}$, $y \in
  G_{i_2}$ and $z \in G_{i_3}$, $G^{Init}$ and $G^{TR}$ display the
  same topology for the triplet $(x,y,z)$, i.e. $G^{Init}|_{\{x,y,z\}}
  = G^{TR}|_{\{x,y,z\}}$.
\end{definition}

\begin{figure*}[!t]
\begin{center}
\includegraphics[width= .9\textwidth]{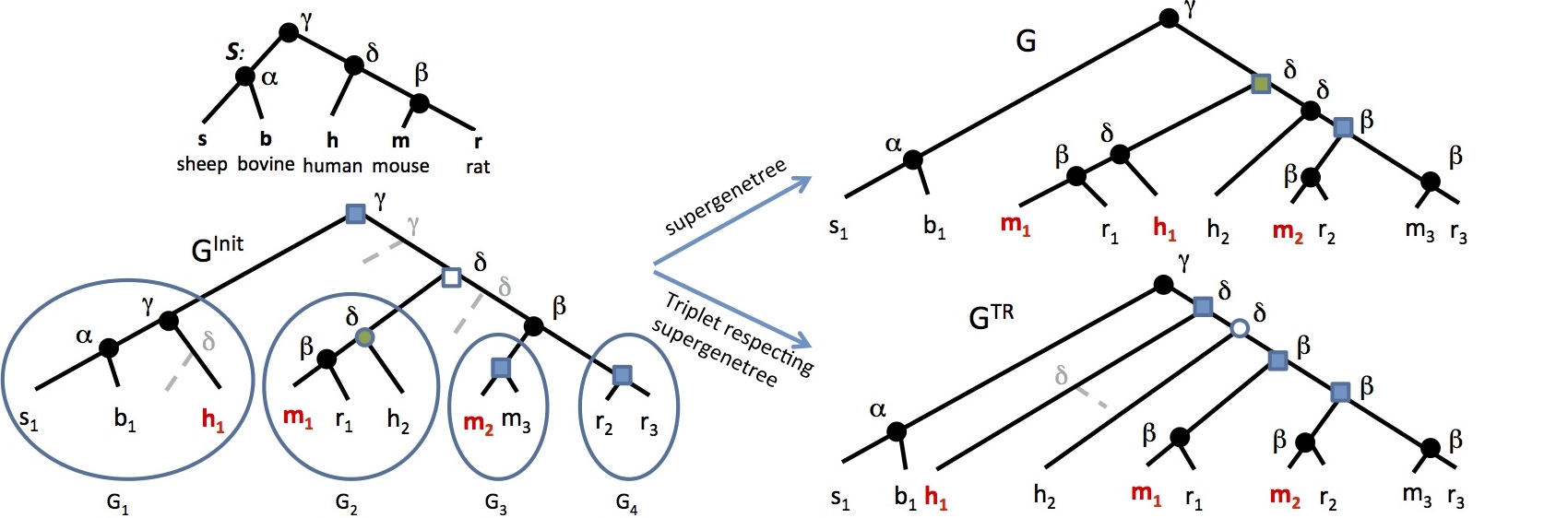}\hspace{0.1cm}
\caption{A species tree $S$ on the set of species $\Sigma =
  \{s,b,h,m,r\}$, and a gene tree $G^{Init}$ with a set of
  separated subtrees $\genetrees = \{G_1, G_2, \cdots, G_4\}$.
  The name of genes, the form and color of internal nodes and
  lines follow the same rules as in Figure 1.
  $G$ is a supergenetree for $\genetrees$ of
  minimum LCA-reconciliation cost (cost of $2$) and  $G^{TR}$ is a
  triplet respecting supergenetree for $\genetrees$ of LCA-reconciliation
  cost $4$ ($3$ duplications + $1$ loss). $G$
  is not a triplet respecting supergenetree because for example, for the
  triplet of genes $(h_1, m_1, m_2)$ in red color, $G^{Init}$ displays the
  topology $(h_1,m_1m_2)$ while $G$ displays the topology $(h_1m_1, m_2)$.
  In $G^{TR}$ however, all triplet genes topologies are respected.}\label{fig:trsupergenetree}
\end{center}
\end{figure*}

For example in Figure~\ref{fig:trsupergenetree}, the supergenetree $G$
is not triplet respecting as for the triplet of genes $(h_1, m_1,
m_2)$, $G$ does not display the same topology as the tree
$G^{Init}$.\\

\noindent \textsc{Minimum Triplet Respecting Supergenetree ($MinTRS$) Problem:}\\
\noindent {\bf Input:} A species set $\taxa$ and a species tree $S$
for $\taxa$; a gene family $\genes$ and a gene tree $G^{Init}$ for $\genes$; a set $\genetrees = \{G_1, G_2, \cdots, G_k\}$ of separated subtrees
of $G^{Init}$ such that $\bigcup_{i=1}^{k}{\L(G_i)} = \genes$.\\
\noindent {\bf Output:} Among all triplet respecting gene trees
for $\genes$ compatible with $\genetrees$, one of minimum
LCA-reconciliation cost.\\

A natural extension of the $MinTRS$ Problem is the {\sc Minimum
  Labeled Triplet Respecting Supergenetree ($MinLTRS$)} Problem, where
we are given a set of labeled separated subtrees of $G^{Init}$ and we
seek for a labeled triplet respecting supergenetree of minimum
reconciliation cost. Here we focus on $MinTRS$, though all results
extend naturally to $MinLTRS$, as briefly explained at the end of
this section.
%The labeled version of the \textsc{MinTRS} problem is the following.\\
%
%
%\noindent \textsc{Minimum Labeled Triplet Respecting Supergenetree Problem (MinLTRS problem):}%\\
%\noindent {\bf Input:} A species set $\taxa$ and a species tree $S$
%for $\taxa$; a gene family $\genes$ and a S-consistent labeled gene tree
%$(G^{Init},ev)$ for $\genes$; a set
%$\genetrees = \{(G_1,ev_1), (G_2,ev_2), \cdots, (G_k,ev_k)\}$ of separated
%subtrees of $(G^{Init},ev)$ such that
%$\bigcup_{i=1}^{k}{\L(G_i)} = \genes$ and $ev_i = ev_{|V(G_i)}$
%for each $i$, $1\leq i \leq k$.\\
%\noindent {\bf Output:} Among all labeled triplet respecting gene trees
%for $\genes$ label-compatible with $\genetrees$, one of minimum
%reconciliation cost.\\
Note that the $MinTRS$ and $MinLTRS$ problems can be reduced to the $MinSGT$ and $MinLSGT$ problems by considering as input of $MinSGT$ 
and $MinLSGT$ the set of subtrees $\genetrees$ of 
$G^{Init}$ augmented with the set of all rooted triplet trees
that should be respected by the output supergenetree. 
However, the algorithms for  $MinTRS$ and $MinLTRS$ problems induced by these reductions would remain exponential in the number of input subtrees.

We describe a more efficient recursive algorithm that solves the $MinTRS$
problem by making use of the $MinSGT$ solution. This algorithm leads to the following result.

\begin{theorem}\label{theorem:complexity-mintrs}
The $MinTRS$ and $MinLTRS$ problems can be solved in $O(n^2)$ time complexity.
\end{theorem}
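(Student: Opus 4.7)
The plan is a bottom-up recursion on $G^{Init}$ that invokes $MinSGT$ at each skeleton node, filtered to respect the triplet constraints. First, I would define the \emph{skeleton} $T^*$ of $G^{Init}$ as the binary tree on the super-leaves $\{G_1,\ldots,G_k\}$ obtained by contracting every $G_i$ to a single leaf; its internal nodes correspond precisely to the skeleton nodes of $G^{Init}$, i.e.\ those not lying inside any $G_i$.

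The core of the argument is a structural characterization of the bipartitions that can appear in a triplet-respecting supergenetree $G^{TR}$. I would prove that at any internal node $v$ of $T^*$ with children $v_l,v_r$, the bipartition of the corresponding $G^{TR}$-node is heavily constrained by the $T^*$-split. When $v$ is a \emph{cherry} of $T^*$ (both $v_l,v_r$ are single input subtrees $G_i,G_j$), the only requirement is that $\L(G_i)\cup\L(G_j)$ be treated coherently as a block, and every $MinSGT$-compatible bipartition of that block is admissible; at every other internal node the set of permitted bipartitions is far more restricted and determined by a case analysis on how $T^*[v_l]$ and $T^*[v_r]$ decompose into input subtrees. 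The proof would be by contradiction: any bipartition outside the permitted set can be combined with a third gene drawn from a distinct third input subtree to expose a triplet whose $G^{TR}$-topology contradicts the $G^{Init}$-topology.

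The algorithm then traverses $G^{Init}$ bottom-up. Inside a $G_i$, it returns the relevant subtree of $G_i$ unchanged. At each skeleton node $u$ with children $u_l,u_r$ and recursively computed supergenetrees $T_l,T_r$, it invokes the $MinSGT$ routine of Theorem~\ref{theorem:complexity-minsgt} on $\{T_l,T_r\}$, rejecting the candidate bipartitions forbidden by the structural characterization (a constant-time test per candidate). Since $|\L(T_l)|=|\L(G^{Init}[u_l])|=:s_l(u)$ and similarly $s_r(u)$, the $MinSGT$ call visits $O(s_l(u)\,s_r(u))$ DP states with $O(1)$ work each. Summing over the internal nodes of $G^{Init}$,
$$\sum_{u\text{ internal of }G^{Init}} s_l(u)\cdot s_r(u) \;=\; \binom{n}{2} \;=\; O(n^2),$$
since each pair of leaves contributes exactly once, at its LCA in $G^{Init}$. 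For $MinLTRS$, the same recursion applies with $MinLSGT$ substituted for $MinSGT$, together with the label-compatibility test inherited from Lemma~\ref{prop:valid-bipartition-labeled}, and yields the same $O(n^2)$ bound.

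The hardest step will be the structural characterization: carefully working through the cases on how $T^*[v_l]$ and $T^*[v_r]$ decompose into input subtrees to pin down exactly which bipartitions at the $G^{TR}$-node above $\L_v:=\bigcup_{G_i\in T^*[v]}\L(G_i)$ are triplet-respecting. Once that is in hand, the filter applied to each $MinSGT$ candidate is a constant-time test, correctness is immediate by induction on $G^{Init}$, and the $O(n^2)$ runtime follows from the standard LCA-counting identity above.
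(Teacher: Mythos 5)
Your overall plan coincides with the paper's: unrestricted mixing happens only at skeleton cherries (the paper's stop condition $|\genetrees(x)|=2$, handled by $MinSGT(G_i,G_j)$), everything higher up is either rigid or reduces to grafting one side's clade into a single input subtree, and the quadratic bound comes from charging each skeleton node a product of child-subtree sizes. Your complexity bookkeeping --- $O(s_l(u)s_r(u))$ DP states per skeleton node with $O(1)$ work each, summed via $\sum_u s_l(u)s_r(u)=\binom{n}{2}$ --- is sound, and is arguably a cleaner account than the paper's per-node $O(n_x^2)$ induction.

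The genuine gap is that the structural characterization, which you yourself flag as the hardest step, is exactly where the paper's proof lives, and you never state it, let alone prove it. What is needed is Lemma~\ref{lem:lca-mapping-mintrs}: for every node $x$ of $G^{Init}$ with at least two input subtrees below it, $\L(x)$ must appear as a clade of any triplet-respecting supergenetree. Your third-gene contradiction is the right technique, but the argument still has to be carried out (choosing genes $a,b$ from the two sides of $x$ witnessing its image $x'$ in $G^{TR}$, and a gene $c$ from a third input subtree hanging off the strict ancestor $x''=s_{G^{TR}\rightarrow G^{Init}}(x')$). From this one must then derive the case analysis of Lemma~\ref{lem:valid-bipartition-mintrs}: when both children of a skeleton node carry at least two input subtrees the split is forced to be $(\L(x_l),\L(x_r))$, and when one child is a single input tree $G_i$ the only freedom is the node $x^*$ of $G_i$ at which the other side's clade is grafted. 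That case analysis is precisely the content of your ``constant-time filter''; without it the filter is undefined and the algorithm is not specified. A second, smaller omission: ``correctness is immediate by induction'' silently uses the fact that the cost of carrying the clade $\L(x_r)$ along the spine of $G_i$ depends only on $s(\L(x_r))$ and not on the clade's internal topology; this separability is what entitles you to graft the recursively optimal $T_r$ instead of quantifying over all triplet-respecting trees for the right part (the paper makes this explicit via the formula for $cost_{TR}(x_l,x_r,x^*)$ and the optimality argument in Lemma~\ref{theorem:mintrs}). With those two pieces supplied, your filtered-$MinSGT$ formulation becomes equivalent to the paper's grafting recurrence and the $O(n^2)$ bound for both $MinTRS$ and $MinLTRS$ follows.
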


The high-level description of the algorithm is as follows.
The triplet-respecting property only allows a limited 
number of ways to combine the subtrees of $\mathcal{G}$
together. We distinguish two possible cases.
First, if two subtrees $G_1, G_2$ of $\mathcal{G}$ form a ``cherry'' in $G^{Init}$, meaning that $r(G_1)$ and $r(G_2)$ share the same parent in $G^{Init}$, then $G_1$ and $G_2$ 
can be mixed together in any way without contradicting the triplet-respecting property.
The optimal way of mixing the two trees is to compute 
$MinSGT(G_1, G_2)$, which gives a solution for 
the subtree of $G^{Init}$ rooted at the parent of $r(G_1)$ and $r(G_2)$.  For instance in Figure~\ref{fig:trsupergenetree}, the two children of the $\beta$ node in $G^{Init}$ form a cherry of subtrees.  
Second, if instead a subtree $G_1$ of $\mathcal{G}$ is not part of such a cherry, then let $x$ be the sibling of $r(G_1)$ 
in $G^{Init}$.  Then we show that the following procedure can be performed: 
recursively compute $G_x$, an optimal solution for the subtree of $G^{Init}$ rooted at $x$, then try grafting 
$G_x$ on $G_1$ in every possible way and keep the solution that
minimizes the reconciliation cost.  This gives a solution 
for the subtree of $G^{Init}$ rooted at the parent of $r(G_1)$ and $x$.
These two cases describe all the possible subtree mixings that can occur, and the rest of the $G^{Init}$ topology must be conserved.
For example in Figure~\ref{fig:trsupergenetree}, 
from a bottom-up point of view,
the algorithm would compute $G_{3, 4} = MinSGT(G_3, G_4)$, 
then obtain $G_{2,3,4}$ by finding the best place on which to graft $G_{3,4}$ on $G_2$ (in this case, above the parent of $m_1$ and $r_1$), then obtain a solution by
grafting $G_{2,3,4}$ somewhere on $G_1$ (in this case above $h_1$).  In the following we rather 
describe the algorithm in a top-down manner, i.e. 
we start at the root of $G^{Init}$, obtain a solution 
recursively for its two child subtrees and combine them appropriately.

Before
describing the algorithm in full detail, we give a few additional definitions and
properties.
Let $G$ and $G'$ be two gene trees for $\genes$.  Define $s_{G'
  \rightarrow G}$ as the mapping from the nodes of $G'$ to the nodes
of $G$ such that $s_{G' \rightarrow G} (x) = lca_G(\L(x))$.  For
example in Figure~\ref{fig:trsupergenetree}, the image of the
green-colored node of $G^{Init}$ by $s_{G^{Init} \rightarrow G}$ is
the green-colored node of $G$.

The algorithm for $MinTRS$ constructs the triplet respecting
supergenetree $G^{TR}$ by building recursively and independently the
bipartitions $(\L(y_l), \L(y_r))$ induced by each internal node $y$ of
$G^{TR}$ from the root to the leaves.  The nodes of $G^{TR}$ can be
considered independently in the algorithm because the constraint of
being triplet respecting strongly predetermines the set of leaves
$\L(y)$ associated to some nodes $y$ of $G^{TR}$ as shown in Lemma
\ref{lem:lca-mapping-mintrs}.

Given a node $x$ of $G^{Init}$, we denote by $\genetrees(x)$ the
subset of $\genetrees$ that are subtrees of
$G^{Init}[x]$. If there exists a node $y$ in $G^{TR}$ such that $\L(y)
= \L(x)$, then we also define $\genetrees(y) = \genetrees(x)$.  For
example, call $x$ the white-colored node of $G^{Init}$ in Figure~\ref{fig:trsupergenetree}. Then $\genetrees(x)=\{G_2,G_3,G_4\}$. Now,
for $y$ being the white-colored node of $G^{TR}$, $\L(y)=\L(x)$ and
so $\genetrees(y)=\{G_2,G_3,G_4\}$.

\begin{lemma}
  \label{lem:lca-mapping-mintrs}
Let $G^{TR}$ be a triplet respecting supergenetree for $G^{Init}$ and
$\genetrees =
\{G_1,\ldots,G_k\}$. For any node $x$ of $G^{Init}$ such that
$|~\genetrees(x)~| \geq 2$, there exists a node $y$ of $G^{TR}$ such
that $\L(y) = \L(x)$.\\
\end{lemma}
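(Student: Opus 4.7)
The plan is to fix $L = \L(G^{Init}[x])$, set $y = lca_{G^{TR}}(L)$, and prove that $\L(G^{TR}[y]) = L$. One inclusion is immediate from the definition of $lca$; the reverse I would prove by contradiction, supposing some $w \in \L(G^{TR}[y]) \setminus L$ exists and then exhibiting a triplet whose induced topology in $G^{TR}$ differs from its topology in $G^{Init}$.

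Before launching the contradiction, I would record the auxiliary fact that $L = \bigcup_{G_i \in \genetrees(x)} \L(G_i)$. Since the $\L(G_i)$ partition $\genes$, every leaf of $G^{Init}[x]$ lies in some $G_m$; an ancestor/descendant argument on $x$ and $r(G_m)$ in $G^{Init}$, combined with $|\genetrees(x)| \geq 2$ and the separatedness of the input subtrees, rules out the possibility that $r(G_m)$ is a strict ancestor of $x$. Consequently, the chosen $w$ must belong to some $G_m \in \genetrees \setminus \genetrees(x)$, and in particular $G_m$ is distinct from every tree of $\genetrees(x)$. Assume WLOG that $w \in \L(y_l)$.

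The heart of the proof is selecting leaves $u \in L \cap \L(y_r)$ and $v \in L \cap \L(y_l)$ belonging to two distinct subtrees of $\genetrees(x)$, so that $u, v, w$ come from three different trees of $\genetrees$. Both $L \cap \L(y_l)$ and $L \cap \L(y_r)$ are nonempty since $y = lca_{G^{TR}}(L)$. A short case analysis, exploiting $|\genetrees(x)| \geq 2$, delivers such a pair: either two distinct trees of $\genetrees(x)$ straddle the bipartition induced by $y$ and we read $u, v$ off directly, or a single tree of $\genetrees(x)$ is split by $y$ while another tree of $\genetrees(x)$ lies entirely on one side, in which case we pair one leaf of the split tree with one leaf of the other, chosen on opposite sides of $y$. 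The bookkeeping of this case analysis is the main obstacle I anticipate.

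Once $u, v, w$ are chosen, the contradiction is immediate. In $G^{Init}$, since $u, v \in L$ and $w \notin L$, the triplet $G^{Init}|_{\{u,v,w\}}$ has topology $((u,v), w)$. In $G^{TR}$, since $u \in \L(y_r)$ while $v, w \in \L(y_l)$, we have $lca_{G^{TR}}(u,v) = lca_{G^{TR}}(u,w) = y$ while $lca_{G^{TR}}(v,w)$ lies strictly below $y$, so $G^{TR}|_{\{u,v,w\}}$ has topology $((v,w), u)$. These topologies differ on three leaves drawn from three distinct trees of $\genetrees$, directly contradicting the triplet respecting property of $G^{TR}$ and completing the proof.
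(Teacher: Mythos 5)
Your proposal is correct and takes essentially the same route as the paper's proof: both consider $y=lca_{G^{TR}}(\L(x))$ and derive a contradiction by exhibiting a triplet whose topology differs between $G^{Init}$ and $G^{TR}$, with two genes of $\L(x)$ drawn from distinct trees of $\genetrees(x)$ and one gene of $\L(y)\setminus\L(x)$, which necessarily lies in a third tree of $\genetrees$. The only difference is bookkeeping: the paper picks the two inner genes on opposite sides of $x$ in $G^{Init}$ and of $y$ in $G^{TR}$ (locating the outer gene via $lca_{G^{Init}}(\L(y))$), whereas you pick them on opposite sides of $y$ from distinct trees via a short case analysis, which indeed goes through.
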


\begin{proof}
  Let $x$ be a node of $G^{Init}$ such that  $|~\genetrees(x)~|  \geq 2$.
  Each of the subtrees $G^{Init}[x_l]$ and $G^{Init}[x_r]$
  then contains at least one tree of $\genetrees$.
%because $\genetrees$ partitions the leafset of $G^{Init}$.
  %Let $G_i$ and $G_j$, $1\leq i < j \leq k$ be two trees of $\genetrees$
  %such that $G_i$ is contained in
  %the subtree $G[x_l]$ and $G_j$ is contained in $G[x_r]$.
  %Let $a$ and $b$ be two genes such that $a \in \L(G_i)$ and $b \in \L(G_j)$.
  Assume that (*) there exists no node $y$ in $G^{TR}$ such that
  $\L(y) = \L(x)$.
  Let $x'$ be the node of $G^{TR}$ such that $s_{G^{Init} \rightarrow G^{TR}} (x) = x'$, and
  let $x''$ be the node of $G^{Init}$ such that $s_{G^{TR} \rightarrow G^{Init}} (x') = x''$.
  The assumption (*) implies that $x'' \neq x$, so $x''$ is a strict ancestor
  of $x$.
  Suppose w.l.o.g. that $x$ belongs to the subtree $G^{Init}[x''_l]$
  and pick any gene $c\in \L(x') \cap \L(x''_r)$. There exists
  a tree $G_h$ of $\genetrees$ such that $c\in\L(G_h)$ and $G_h$
  is contained in $G^{Init}[x''_r]$.
  Now, let $a$ and $b$ be two genes such that $a \in \L(x_l) \cap \L(x'_l)$
  and $b \in \L(x_r) \cap \L(x'_r)$, or $a \in \L(x_l) \cap \L(x'_r)$ and
  $b \in \L(x_r) \cap \L(x'_l)$. Such two genes necessarily exist because
  $s_{G^{Init} \rightarrow G^{TR}} (x) = x'$.  So, there exist two trees $G_i$ and $G_j$
  of $\genetrees$ such that $(a,b)\in\L(G_i)\times\L(G_j)$, $G_i$
  is contained in $G^{Init}[x_l]$ and $G_j$ is contained in $G^{Init}[x_r]$.
  So $G^{Init}$ displays the topology $(ab,c)$ for
  the triplet of genes $(a,b,c)$ while $G^{TR}$ displays a different topology,
  either $(a,bc)$ or $(ac,b)$. The assumption (*) is then impossible.
\end{proof}

For example, in Figure~\ref{fig:trsupergenetree}, call $x$
the white-colored node of $G^{Init}$. Then, $|~\genetrees(x)~| = 3$, and
there does exist a node $y$ in $G^{TR}$ (the white-colored node) such that
$\L(y) = \L(x)$. However, there exists no such node in $G$.

We denote by $V_{cons}(G^{TR})$ the subset of nodes $y$ of $G^{TR}$
such that there exists a node $x$ of $G^{Init}$ satisfying
$\L(x)=\L(y)$ and $|~\genetrees(y)~| \geq 2$.  For example, in Figure~\ref{fig:trsupergenetree}, $V_{cons}(G^{TR})$ contains three nodes,
the root, the white-colored node and the lowest duplication node
of $G^{TR}$. Lemma
\ref{lem:lca-mapping-mintrs} allows to predetermine the sets of leaves
$\L(y)$ associated to the nodes $y\in V_{cons}(G^{TR})$.
We now describe how to find the best subtree $G^{TR}[y]$
for each node $y\in V_{cons}(G^{TR})$,
i.e. one leading to the minimum reconciliation cost.

Note that if $y\in V_{cons}(G^{TR})$ and $|~\genetrees(y)~| = 2$,
say $\genetrees(y) = \{G_i,G_j\}$, $1\leq i < j \leq k$, then the
$MinSGT$ algorithm can be applied to build the subtree
$G^{TR}[y]$ as a minimum reconciliation cost supergenetree for $G_i$ and $G_j$.
It then remains to describe a recursive procedure for finding the subtree
$G^{TR}[y]$ for a node $y\in V_{cons}(G^{TR})$ such that $|~\genetrees(y)~| > 2$.

In order to compute the reconciliation cost of the tree $G^{TR}$,
we need to account for the local reconciliation costs for the nodes
$y\in V_{cons}(G^{TR})$, and also for the internal nodes $z$ of $G^{TR}$
such that $z\not\in V_{cons}(G^{TR})$.
To do so, given a node $y\in V_{cons}(G^{TR})$, we define
$cost_{TR}(y)$ as the local reconciliation cost for $y$, plus the
local reconciliation costs for all internal nodes $z\in V(G^{TR})$
such that $z\not\in V_{cons}(G^{TR})$, $y$ is an ancestor of $z$ and
there exists no node $y'\in V_{cons}(G^{TR})$ on the path between $y$
and $z$.
For example in Figure~\ref{fig:trsupergenetree}, call $y$
the root of $G^{TR}$. Then, $y\in V_{cons}(G^{TR})$ and $cost_{TR}(y) =
cost(\L(y_l),\L(y_r)) + cost(\L(y_{l_l}),\L(y_{l_r})) +
cost(\L(y_{r_l}),\L(y_{r_r}))$ counting the local reconciliation costs
for $y$, $y_l\not\in V_{cons}(G^{TR})$ and $y_r\not\in
V_{cons}(G^{TR})$.
% and $y_{l_l}$, $y_{l_r}$, $y_{r_l}$, $y_{r_r}$ are either leaf nodes or
% elements of $V_{cons}(G^{TR})$.
%
We then obtain a formulation of the reconciliation cost of $G^{TR}$ as 
the sum of $cost_{TR}(y)$ for all nodes $y\in V_{cons}(G^{TR})$.

Lemma \ref{lem:valid-bipartition-mintrs} describes the ``valid'' configurations
of a subtree $G^{TR}[y]$ rooted at a node $y\in V_{cons}(G^{TR})$ such
that $|~\genetrees(y)~| > 2$, and the formula for computing
 $cost_{TR}(y)$  in each case. The following notations are used
 in Lemma \ref{lem:valid-bipartition-mintrs}. Given a node $x$
 of $G^{Init}$, and a node $x^*$ of $G^{Init}[x_l]$, $A(x^*)$ 
 is the set of all strict ancestors of
  $x^*$ in $G^{Init}[x_l]$,
  $A_l(x^*)$ is the subset of $A(x^*)$ such that $u\in A_l(x^*)$ if  $x^* \in V(G^{Init}[u_l])$ and $A_r(x^*)=A(x^*)\setminus A_l(x^*)$.

\begin{lemma}
  \label{lem:valid-bipartition-mintrs}
  Let $G^{TR}$ be a triplet respecting supergenetree for
  $\genetrees = \{G_1,\ldots,G_k\}$.
  Let $y$ be a node of $G^{TR}$ such that $y\in V_{cons}(G^{TR})$
  and $|~\genetrees(y)~| > 2$. Let $x$ be the node of $G^{Init}$
  such that $\L(y) = \L(x)$.

\begin{enumerate}
\item If $|~\genetrees(x_l)~| = 1$ and $|~\genetrees(x_r)~| \geq 2$,
  let $y^*\in V_{cons}(G^{TR})$ be the node of $G^{TR}$ such that
  $\L(y^*) = \L(x_r)$.
  The subtree $G^{TR}[y]$ can be obtained by taking the
  tree $G^{Init}[x_l]$ and grafting the tree $G^{TR}[y^*]$ onto it
  such that the root of $G^{TR}[y^*]$ appears as the sibling of a node
  $x^*$ of $G^{Init}[x_l]$.\\
  The cost $cost_{TR}(y)$ is then given by the following formula: 

\begin{tabular}{l}
 $cost_{TR}(y) = cost_{TR}(x_l,x_r,x^*) = $ \\
$\sum_{u~\in~A_l(x^*)}{cost(\L(u_l)\cup \L(x_r),\L(u_r))}$\\
  $+\sum_{u~\in~A_r(x^*)}{cost(\L(u_l),\L(u_r)\cup \L(x_r))}$ \\
  $+\sum_{u~\in~(V(G^{Init}[x_l]) \setminus A(x^*))}{cost(\L(u_l),\L(u_r))}$ \\
  $+cost(\L(x_l),\L(x_r))$ (if $x^* = x_l$)
\end{tabular}

%$$cost_{TR}(y) = cost_{TR}(x_l,x_r,x^*) =
%\begin{array}{l}
% ~~~~ \sum_{u~\in~A_l(x^*)}{cost(\L(u_l)\cup \L(x_r),\L(u_r))}  \\
% + ~\sum_{u~\in~A_r(x^*)}{cost(\L(u_l),\L(u_r)\cup \L(x_r))}  \\
% + ~ \sum_{u~\in~(V(G^{Init}[x_l]) \setminus A(x^*))}{cost(\L(u_l),\L(u_r))}  \\
% + ~cost(\L(x_l),\L(x_r)) ~(if ~ x^* = x_l)
%
%\end{array}
%$$

  \item If $|~\genetrees(x_l)~| \geq 2$ and $|~\genetrees(x_r)~| = 1$,  then this case is symmetric to the previous case.

  \item If $|~\genetrees(x_l)~| \geq 2$ and $|~\genetrees(x_r)~| \geq 2$,
    then  $G^{TR}[y]$ is such that $\L(y_l)=\L(x_l)$ and $\L(y_r)= \L(x_r)$, and
$cost_{TR}(y)=cost(\L(x_l),\L(x_r))$.
\end{enumerate}

  \end{lemma}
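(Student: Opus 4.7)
The plan is to case-split on $|\genetrees(x_l)|$ and $|\genetrees(x_r)|$, apply Lemma~\ref{lem:lca-mapping-mintrs} to pin down the topology of $G^{TR}[y]$ in each case, and then derive the formula for $cost_{TR}(y)$ by summing the local reconciliation costs of the internal nodes that contribute to it (that is, $y$ itself together with the internal nodes of $G^{TR}[y]$ lying outside $V_{cons}(G^{TR})$ that sit above the first $V_{cons}$-descendant encountered on each downward path from $y$).

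For case~3, Lemma~\ref{lem:lca-mapping-mintrs} applied to $x_l$ and $x_r$ yields nodes $y_l^{*}, y_r^{*} \in V_{cons}(G^{TR})$ with $\L(y_l^{*}) = \L(x_l)$ and $\L(y_r^{*}) = \L(x_r)$. Both must be descendants of $y$, since $\L(y_l^{*}) \cup \L(y_r^{*}) = \L(y)$ and $\L(y_l^{*}) \cap \L(y_r^{*}) = \emptyset$. Because $G^{TR}$ is binary, the children $y_l,y_r$ of $y$ induce a bipartition of $\L(y)$; and since no strict descendant of $y_l$ (resp.\ $y_r$) can already cover all of $\L(x_l)$ (resp.\ $\L(x_r)$) without being equal to $y_l$ (resp.\ $y_r$), the only consistent possibility is $\{y_l, y_r\} = \{y_l^{*}, y_r^{*}\}$. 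Both children of $y$ thus lie in $V_{cons}(G^{TR})$, and the definition of $cost_{TR}$ yields $cost_{TR}(y) = cost(\L(x_l),\L(x_r))$ directly.

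For case~1, first observe that since $\bigcup_{i=1}^k \L(G_i) = \genes$ and the $G_i$'s are separated subtrees of $G^{Init}$, any leaf of $G^{Init}[x_l]$ lying in some $G_j \notin \genetrees(x_l)$ would force $r(G_j)$ to be a strict ancestor of $x_l$, and hence $\L(G_i) \subseteq \L(G^{Init}[x_l]) \subseteq \L(G_j)$, contradicting separation. Every leaf of $G^{Init}[x_l]$ therefore lies in $\L(G_i)$, which combined with $G_i$ being a subtree of $G^{Init}[x_l]$ forces $G^{Init}[x_l]=G_i$. Applying Lemma~\ref{lem:lca-mapping-mintrs} to $x_r$ yields some $y^* \in V_{cons}(G^{TR})$ with $\L(y^*) = \L(x_r)$, which must be a strict descendant of $y$. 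Since $G^{TR}$ is compatible with $G_i$, the restriction $G^{TR}[y]|_{\L(G_i)}$ is isomorphic to $G_i$; since all leaves outside $\L(G_i)$ lie under $y^*$, this restriction coincides with the tree obtained from $G^{TR}[y]$ by detaching $G^{TR}[y^*]$ and suppressing the resulting degree-$2$ node. Hence $G^{TR}[y]$ is exactly $G_i = G^{Init}[x_l]$ with $G^{TR}[y^*]$ grafted as the sibling of some node $x^* \in V(G^{Init}[x_l])$. Case~2 follows by the same argument with $x_l$ and $x_r$ swapped.

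The cost formula in case~1 is then a bookkeeping exercise over the internal nodes of $G^{TR}[y]$ outside $G^{TR}[y^*]$, which are in natural correspondence with the internal nodes of $G^{Init}[x_l]$ (together with $y$ itself when $x^* = x_l$). For an internal node $u$ of $G^{Init}[x_l]$ that is a strict ancestor of $x^*$, the graft augments exactly one side of $u$'s bipartition with $\L(x_r)$, giving the adjusted costs in either the $A_l(x^*)$ or the $A_r(x^*)$ sum depending on which side contains $x^*$; for internal nodes $u$ not strictly above $x^*$, the bipartition is unchanged, yielding the third sum; and in the boundary case $x^* = x_l$ the root $y$ is the newly created graft node, so $cost(\L(x_l),\L(x_r))$ is added explicitly. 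The main obstacle is the topology argument in case~1, namely pinning down that the ``$y^*$-free'' part of $G^{TR}[y]$ is isomorphic to $G_i$; this rests on the two observations that $G^{TR}$ displays $G_i$ by compatibility with $\genetrees$, and that the leaves of $G^{TR}[y]$ outside $\L(G_i)$ are precisely $\L(x_r)$ and are all sequestered under $y^*$.
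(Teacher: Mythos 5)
Your overall route is the same as the paper's: invoke Lemma~\ref{lem:lca-mapping-mintrs} to force the existence of the node $y^*$ with $\L(y^*)=\L(x_r)$ (and, in case~3, of the two nodes realizing $\L(x_l)$ and $\L(x_r)$), observe that in case~1 the unique tree of $\genetrees(x_l)$ is $G^{Init}[x_l]$ itself and must be displayed by $G^{TR}[y]$, conclude the graft structure, and read the cost off the definition of $cost_{TR}$. In fact you supply details the paper only asserts: that $G^{Init}[x_l]$ equals the unique tree of $\genetrees(x_l)$ (via the spanning and separation hypotheses), the detach-and-suppress argument identifying the $y^*$-free part of $G^{TR}[y]$ with $G_i$, and the identification of $y$'s children with the two $V_{cons}$ nodes in case~3. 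On the topological side your write-up is, if anything, more complete than the paper's.

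The one weak point is the cost bookkeeping in case~1. When $x^* \neq x_l$, the internal nodes of $G^{TR}[y]$ outside $G^{TR}[y^*]$ are \emph{not} in bijection with the internal nodes of $G^{Init}[x_l]$: the graft creates one additional internal node, the common parent of $x^*$ and $r(G^{TR}[y^*])$, whose leafset $\L(x^*)\cup\L(x_r)$ is not the leafset of any node of $G^{Init}$; it therefore lies outside $V_{cons}(G^{TR})$ and, by the definition of $cost_{TR}(y)$, its local cost $cost(\L(x^*),\L(x_r))$ must be counted in $cost_{TR}(y)$. Your accounting (like the displayed formula, which only adds such a term in the boundary case $x^*=x_l$) silently drops this node; a derivation ``from the definition'' should surface it, i.e.\ the final term ought to be $cost(\L(x^*),\L(x_r))$ irrespective of whether $x^*=x_l$, which coincides with the printed term exactly when $x^*=x_l$. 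This appears to be a slip inherited from the statement itself rather than a flaw in your topological argument, but as written your claimed ``natural correspondence'' is off by one node in the generic case, and a careful count of internal nodes (the $y^*$-free part has one more internal node than $G^{Init}[x_l]$) would have exposed it.
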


\begin{proof}
In Case 1, we first deduce from Lemma \ref{lem:lca-mapping-mintrs}
that there must exist a node $y^*\in V_{cons}(G^{TR})$ such that
$\L(y^*) = \L(x_r)$.
Next, $G^{Init}[x_l]$ is one of the gene trees of the set $\genetrees$.
So, it must be displayed by $G^{TR}[y]$ and then, $G^{TR}[y]$ can be
obtained by taking $G^{Init}[x_l]$ and grafting $G^{TR}[y^*]$ onto it.
Finally, Case 2 is symmetric to Case 1 and 
Case 3 follows directly from Lemma \ref{lem:lca-mapping-mintrs}.
The formulas for $cost_{TR}(y)$ follows directly from the definition
of $cost_{TR}$.
\end{proof}

For example in Figure~\ref{fig:trsupergenetree}, the root and
the white-colored node of  $G^{TR}$ fulfills the conditions of the first case.
There are no node $y\in V_{cons}(G^{TR})$ satisfying $|~\genetrees(y)~| > 2$
and fulfilling the conditions of the second or third case.

We are now ready to describe the recurrence formula of the recursive algorithm
solving the $MinTRS$ problem. Given a node $x$ of $G^{Init}$ such that
$|~\genetrees(x)~| \geq 2$,
we denote by $MinTRS(G^{Init}[x])$ the minimum LCA-reconciliation
cost of a triplet respecting supergenetree compatible with $\genetrees(x)$. 

\begin{lemma}\label{theorem:mintrs}
    Let $\genetrees = \{G_1,\ldots,G_k\}$ be a set of separated subtrees
  of a gene tree $G^{Init}$ for  $\genes$ such that
  $\bigcup_{i=1}^{k}{\L(G_i)} = \genes$. Let $x$ be a node of
  $G^{Init}$ such that $|~\genetrees(x)~| \geq 2$.
  
  \begin{enumerate}
  \item (Stop condition) If $|~\genetrees(x)~|= 2$ ($\genetrees(x)=\{G_i,G_j\}$), $MinTRS(G^{Init}[x]) = MinSGT(G_i,G_j)$.
  \item Otherwise (i.e $|~\genetrees(x)~| > 2$),
    
  \begin{enumerate}
  \item If $|~\genetrees(x_l)~| = 1$ and $|~\genetrees(x_r)~| \geq 2$,
\begin{tabular}{l}
 $MinTRS(G^{Init}[x])=$ \\
$\min_{x^*~\in~ V(G^{Init}[x_l])}
    \{
  cost_{TR}(x_l,x_r,x^*)
  \}$\\
  $+ MinTRS(G^{Init}[x_r])$ \\
\end{tabular}

%$MinTRS(G^{Init}[x])=$
% $$
%    \min_{x^*~\in~ V(G^{Init}[x_l])}
%    \{
%  cost_{TR}(x_l,x_r,x^*)
%  \}
%  + MinTRS(G^{Init}[x_r])$$
\item If $|~\genetrees(x_l)~| \geq 2$ and $|~\genetrees(x_r)~| = 1$,
  this case is symmetric to the previous case.
  
  \item If $|~\genetrees(x_l)~| \geq 2$ and $|~\genetrees(x_r)~| \geq 2$,
\begin{tabular}{l}
 $MinTRS(G^{Init}[x])=$ \\
$cost_{TR}(\L(x_l),\L(x_r))$\\
$+ MinTRS(G^{Init}[x_l])$\\
  $+ MinTRS(G^{Init}[x_r])$
\end{tabular}

%$MinTRS(G^{Init}[x]) =$
%    $$cost_{TR}(\L(x_l),\L(x_r)) + MinTRS(G^{Init}[x_l])$$ $$+ MinTRS(G^{Init}[x_r])$$
\end{enumerate}
\end{enumerate}
  \end{lemma}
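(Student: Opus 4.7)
The plan is to prove the recurrence by induction on the depth of the subtree $G^{Init}[x]$ (equivalently, on the size of $\genetrees(x)$), handling each of the four cases in turn. The stop condition is immediate: when $\genetrees(x) = \{G_i, G_j\}$, the triplet-respecting constraint is vacuous since it requires three distinct input subtrees, so every supergenetree compatible with $\{G_i, G_j\}$ is automatically triplet-respecting, and the optimal one is by definition given by $MinSGT(G_i,G_j)$.

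For case (c), both $\genetrees(x_l)$ and $\genetrees(x_r)$ contain at least two subtrees, so Lemma \ref{lem:lca-mapping-mintrs} applies to both $x_l$ and $x_r$: there exist $y_l, y_r \in V_{cons}(G^{TR})$ with $\L(y_l) = \L(x_l)$ and $\L(y_r) = \L(x_r)$. Together with the node $y$ corresponding to $x$, this forces $y_l$ and $y_r$ to be the two children of $y$ in $G^{TR}[y]$, since $(\L(x_l), \L(x_r))$ is a bipartition of $\L(x) = \L(y)$. Then $cost_{TR}(y)$ reduces exactly to $cost(\L(x_l), \L(x_r))$ because both children of $y$ lie in $V_{cons}(G^{TR})$, leaving no internal descendants of $y$ that need to be counted in the local cost aggregation at $y$. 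The two subtrees $G^{TR}[y_l]$ and $G^{TR}[y_r]$ are independent triplet-respecting supergenetrees for $\genetrees(x_l)$ and $\genetrees(x_r)$ respectively, and their optimal values are $MinTRS(G^{Init}[x_l])$ and $MinTRS(G^{Init}[x_r])$ by the induction hypothesis.

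For case (a), $G^{Init}[x_l]$ is itself one of the input subtrees in $\genetrees$ (since $|\genetrees(x_l)| = 1$ and $G^{Init}[x_l]$ contains it and must equal it because $\genetrees$ spans $\genes$), so $G^{TR}$ must display $G^{Init}[x_l]$ intact. Since $|\genetrees(x_r)| \geq 2$, Lemma \ref{lem:lca-mapping-mintrs} provides a node $y^* \in V_{cons}(G^{TR})$ with $\L(y^*) = \L(x_r)$, and Lemma \ref{lem:valid-bipartition-mintrs}(1) describes the only valid shape for $G^{TR}[y]$: take $G^{Init}[x_l]$ and graft $G^{TR}[y^*]$ as the sibling of some node $x^* \in V(G^{Init}[x_l])$. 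The local reconciliation cost accumulated along the path where the grafting changes the ancestors of leaves in $\L(x_r)$ is exactly $cost_{TR}(x_l, x_r, x^*)$ as defined in Lemma \ref{lem:valid-bipartition-mintrs}. Case (b) is symmetric.

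The main subtlety, which I would treat carefully, is showing independence of the two optimizations in case (a): the choice of grafting point $x^*$ and the choice of $G^{TR}[y^*]$. Independence holds because $cost_{TR}(x_l, x_r, x^*)$ depends only on $\L(x_r)$ (through $s(\L(x_r))$) and the fixed topology of $G^{Init}[x_l]$, not on the internal structure of $G^{TR}[y^*]$; conversely, the reconciliation cost contributed by the nodes strictly inside $G^{TR}[y^*]$ depends only on its own topology and the species mapping of $\L(x_r)$, which is fixed. Therefore the minimum in case (a) is achieved by minimizing $cost_{TR}(x_l, x_r, x^*)$ over $x^*$ and separately taking an optimal $G^{TR}[y^*]$, whose cost is $MinTRS(G^{Init}[x_r])$ by induction. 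Combining all three cases gives the claimed recurrence, and the analogous argument carries over verbatim to the labeled variant $MinLTRS$ by invoking the label-compatible versions of the underlying lemmas.
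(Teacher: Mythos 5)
Your proof is correct and follows essentially the same route as the paper: it reduces each case to Lemmas \ref{lem:lca-mapping-mintrs} and \ref{lem:valid-bipartition-mintrs}, treats the two-subtree base case as trivial (the triplet constraint being vacuous there), and handles case 2(a) by optimizing the grafting node $x^*$ locally. Your explicit justification that the choice of $x^*$ and the optimal sub-solution $G^{TR}[y^*]$ can be optimized independently is a point the paper leaves implicit behind ``must be one that locally minimizes $cost_{TR}(x_l,x_r,x^*)$'', so your write-up is, if anything, slightly more detailed on the key dynamic-programming separability step.
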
  

\begin{proof}
  The proof follows from Lemmas \ref{lem:lca-mapping-mintrs}
  and \ref{lem:valid-bipartition-mintrs}, and the fact that each call
  to the recursive procedure $MinTRS$ receives as input a subtree
  $G^{Init}[x]$ such that $|~\genetrees(x)~| \geq 2$, starting with
  the whole tree rooted at $r(G^{Init})$.
  Case 1 is trivial.
  For Case 2(a) (and symmetrically Case 2(b)), following
  Lemma \ref{lem:valid-bipartition-mintrs}, 
  there are $|~V(G^{Init}[x_l])~|$ possible configurations for the subtree
$G^{TR}[y]$ rooted at $y = s_{G^{Init} \rightarrow G^{TR}}(x)$, depending on
which node $x^*$ of $G^{Init}[x_l]$ is chosen to be the sibling of
$y^* = s_{G^{Init} \rightarrow G^{TR}}(x_r)$.
Since $G^{TR}$ must be of minimum reconciliation cost, the configuration for $G^{TR}[y]$ must
be one that locally minimizes the cost $cost_{TR}(x_l,x_r,x^*)$.
Finally, Case 3 follows directly from Lemma \ref{lem:valid-bipartition-mintrs}.
\end{proof}

\noindent\textit{Complexity.} we claim that Lemma~\ref{theorem:mintrs} leads to a $O(n^2)$ 
algorithm for $MinTRS$, where $n = |V(G^{Init})|$. 
Let $x$ be a node of $G^{Init}$, let $n_x$ be the number of nodes in $G^{Init}[x]$ and let 
$n_l$ and $n_r$ be the number of nodes in the left and right subtrees of $x$, respectively.
As a base case, if $x$ falls into
case 1 of Lemma~\ref{theorem:mintrs}, then running $MinSGT$ on the two child subtrees of 
$x$ takes time $O(\max\{n_l, n_r\}^2) = O(n_x^2)$.  
Suppose instead that $x$ falls into case 2.a, and thus $\genetrees(x_l) = 1$ and $\genetrees(x_r) \geq 2$.
We may assume by induction that computing $MinTRS(G^{Init}[x_r])$ requires time $O(n_r^2)$.  
Afterwards, grafting 
the resulting tree is done on each $O(n_l)$ branch of $G^{Init}[x_l]$, and computing the cost
can be done in time $O(n_l)$ for each grafting.  Thus in total, case 2.a can be handled in time 
$O(n_r^2 + n_l^2) = O(n_x^2)$.  The case 2.b of Lemma~\ref{theorem:mintrs} is symmetric, 
and the case 2.c can be handled in constant time.  As the quadratic bound holds for every node, 
we get a bound of $O(n_x^2) = O(n^2)$ when $x$ is the root.\\

%\noindent\textit{Complexity:} Since there are at most $\frac{k}{2}$ nodes
%$x$ in $G^{Init}$ such that $|~\genetrees(x)~| = 2$, there are at
%most $\frac{k}{2}$ steps calling the  procedure $MinSGT$ (Stop condition),
%yielding a complexity in $O(k \times n^2)$. In all other steps (at most $k$
%steps), $MinTRS(G^{Init}[x])$ is computed in $O(n)$ constant time,
%yielding a complexity in $O(k \times n)$, provided that
%$cost_{TR}(x_l,x_r,x^*)$ and $cost(\L(x_l),\L(x_r))$ are precomputed for all
%$x\in V(G^{Init})$ such that $|~\genetrees(x)~| \geq 2$, and
%$x^* \in V(G^{Init}[x])$. The overall complexity of the algorithm is
%then in  $O(k \times n^2)$.

\noindent\textit {Algorithm for $MinLTRS$.}
The algorithm for the $MinTRS$ problem can be adapted to solve
the $MinLTRS$ problem. The adaptation consists in (1) replacing 
the calls to $MinSGT(G_i,G_j)$ in the stop condition of Lemma
\ref{theorem:mintrs} by calls to $MinLSGT((G_i,ev_i),(G_j,ev_j))$,
and (2) replacing the use of $cost(L_l,L_r)$ in order to define
$cost_{TR}(y)$ in Lemma \ref{lem:valid-bipartition-mintrs} by the use of
$cost_{\genetrees}(L_l,L_r)$.
Moreover,  Lemma \ref{lem:reconciliation-cost2}  must be extended such that
$cost_{\genetrees}(L_l,L_r) = +\infty$ if $(L_l,L_r)$ is not label-compatible
with $\genetrees$. The complexity of the algorithm remains unchanged in $O(n^2)$.

% ---------------------------------------------------------------------------
\section{Experiments}\label{application}

\begin{table*}[!t]
\begin{center}
\begin{tabular}{|l|c|c|c|c|} \hline
             & $\%$ of  & Avg. & Avg & Trees with\\
             & modified & running & rec. cost & better  \\ 
             & trees & time  & reduction & AU value \\ \hline
$MinSGT$ &    211  & 205240 ms  & 22.5      &  \cellcolor{gray!25} \\ 
             &    97.2\%         &         & (24.8\%)  & \cellcolor{gray!25} \\ \hline
$MinLSGT$&    207  & 113 ms & 19.5        &  \cellcolor{gray!25} \\ 
             &    95.3\%       &          & (21.5\%)  &  \cellcolor{gray!25}     \\ \hline
$MinTRS$ &  211    & 3031 ms & 15.5       &  68.6\%\\ 
             &      97.2\%      &         & (17.1\%)  &  \\ \hline
$MinLTRS$&    207  & 60 ms & 13.5       &  66.4\% \\ 
             &    95.3\%       &          & (14.9\%)  &       \\ \hline 
$PolytomySolver$&    20 ms  & 3 ms & 3.65    &  50.0\% \\ 
             &     9.2\%       &          & (4.0\%)  &        \\ \hline          
%$MinTRS$ &   336  & 9035 ms & 16.4     & 1.36     & 68.6\% \\ 
%             &    76.8\%        &         & (12.6\%) & (18.2\%) &  \\ \hline
%$MinLTRS$ &  333  & 168  ms & 13.9      & 1.37 & 66.4\%  \\ 
%             &   76.2\%          &          & (10.6\%) & (18.3\%) &       \\ \hline \hline
             
\end{tabular}\\
\hspace{1cm}
\caption{Results for the 217 Ensembl trees (see text for all details). Second column:
  number and percentage of corrected trees;
  Third column: mean running time for a tree in ms; Fourth column:
  mean value and percentage of the reconciliation cost reduction, i.e. difference in
  reconciliation cost between the original and corrected tree; Last column: percentage of corrected trees that have a
  better AU value than the original Ensembl trees (Due to PhyML's long computation time, 
  we could not obtain the AU values for $MinSGT$ and $MinLSGT$).}\end{center}
\end{table*}\label{table:results}
 
In the context of gene tree correction, we wanted to evaluate: (1) the benefit of the new supertree approach allowing to merge clades from different subtrees, compared with the more constrained polytomy resolution approach~\cite{LafondNoutahi2016} which conserves input subtrees separated; (2)  the benefit of the additional triplet preservation requirement of $MinTRS$ and $MinLTRS$. Both evaluations were made based on the conjecture of dubious highest duplication nodes in gene trees~\cite{HAHN07,SwensonMabrouk12}.

For this purpose, we considered the gene trees of the Ensembl vertebrate database Release 84 rooted at a duplication node. For each tree $G^{Init}$, $\genetrees$ was defined as the set of all subtrees of $G^{Init}$ rooted at the ``highest speciation nodes'', i.e. speciation nodes with only duplication nodes as ancestors. On average, $G^{Init}$ contains $121.2$ leaves and is partitioned into  $7.5$ subtrees.  Aiming at comparing all developed algorithms, including the exponential time $MinSGT$ and $MinLSGT$, we restricted the sample to the $217$ gene trees with at most $200$ leaves and partitioned into at most $5$ subtrees. We also applied, on these $217$ trees,  $PolytomySolver$~\cite{LafondNoutahi2016}  which, given a set of trees $G_1, \ldots, G_k$, finds a binary tree with leafset $\{G_1, \ldots, G_k\}$ such that the reconciliation cost of the resulting tree is minimum.  Results are given in Table~\ref{table:results}. 

While the four supertree algorithms correct more than $200$ trees, corresponding to more than $95\%$ of the 217 trees, $PolytomySolver$ only corrects $20$ trees corresponding to about $9\%$ of the trees. 
%(9.2\%)$ trees, compared to $210~(96.7\%)$ and $207~(95.3\%)$ for $MinSGT$ and $MinLSGT$ respectively; 
Additionally, $PolytomySolver$ reduces the reconciliation cost by only $4\%$ on average on the $20$ corrected trees, compared to more than $15\%$ for supertree algorithms. Clearly, by exploring a larger solution space, $MinSGT$ and $MinLSGT$ allow to obtain the best solutions in terms of reconciliation cost. 
%This illustrates the impact of allowing to merge input trees in terms of the size of the solution space explored by theses two approaches, and motivates the introduction of more constrained versions of the correction problem, such as $MinTRS$ and $MinLTRS$.

% For the sake of comparison, we also ran $minSGT$ and $minLSGT$
% on the 437 trees containing at least two highest speciations - that is
% merge the subtrees rooted at highest speciations without any
% constraint on the triplets.  Due to the high time complexity of the
% algorithms, the programs were only able to finish their execution on
% 218 trees.  These included the trees on at most $200$ leaves and that
% were partitioned into at most $5$ subtrees.  On average, $minSGT$
% reduced the reconciliation cost by $22.5$ ($24.8$\%) and $minLSGT$
% by $19.5$ ($21.4$\%).  We also ran the program $PolytomySolver$~\cite{LafondNoutahi2016} on the $437$ trees.  Given
% a set of trees $G_1, \ldots, G_k$, $PolytomySolver$ finds a binary
% tree with leafset $\{G_1, \ldots, G_k\}$ such that the reconciliation
% cost of the resulting tree is minimum.  This task is similar to the
% one performed by {\sf minTRS}, except that the given subtrees cannot
% be merged in any way - only the topology above the subtrees can be
% modified.  Only $80$ ($2.6$\%) of the $437$ trees were modified by
% $PolytomySolver$, with an average reduction on the reconciliation
% cost of $1.6$ ($1.2$\%).

As for $MinTRS$ and $MinLTRS$, although more constrained than $MinSGT$ and $MinLSGT$, they lead surprisingly to almost as much correction as these two algorithms, while the correction achieved by $PolytomySolver$ is clearly less. The triplet preserving constraint appears to be less stringent than the conservation of the subtrees. In particular, for the  trees leading to only two subtrees, $PolytomySolver$ conserves the initial tree. %Notice that the correction rates of $MinTRS$ and $MinLTRS$ are lower on the set of 437 trees (bottom of Table~\ref{table:results}). This is due to about a hundred of trees with only duplication nodes that remain unmodified. 
Notice that introducing the labeling constraint ($MinSGT$ versus $MinLSGT$ and $MinTRS$ versus $MinLTRS$) only leads to a slight decrease of the  correction rates. 

Finally, in order to assess the benefit of the triplet respecting constraint and the quality of the correction achieved, the trees corrected by  $MinTRS, MinLTRS$ and $PolytomySolver$ were evaluated according to their
statistical support. PhyML~\cite{guindon2003simple} was executed to obtain the
log-likelihood values per site
(note that $150$ trees were included in this evaluation, as PhyML was very time-consuming on the larger trees).  
Consel~\cite{CONSEL} was then run to evaluate, using the AU (Approximately Unbiased) test, 
if the likelihood differences of pairs of  Ensembl and corrected gene trees were significant enough to statistically reject one of them.
A tree can be
rejected if its AU value, interpreted as a p-value, is under
$0.05$. Otherwise, no significant evidence allows us to reject one of
the two trees. 

Interestingly, when compared to the tree output by 
$MinTRS$ (respectively $MinLTRS$), $48.5$\% (resp. $46.5$\%) of the Ensembl gene trees are
rejected compared to only $11.9$\% (resp. $11.0$\%) of the corrected gene trees.
% (respectively of corrected gene trees, while more
More than $68.6$\% (resp. $66.4$\%) of the corrected trees have better AU values than original trees. As for PolytomySolver, $5$\% of the Ensembl trees were rejected, as $25$\% of the corrected trees were rejected, with $50$\% of the corrected trees obtaining a better AU value.  The performance of $MinTRS$ and $MinLTRS$ is rather surprising as our
correction, based on the phylogenetic information of the species tree,
is not expected to improve tree likelihood based on sequence
similarity. This may be an indication that high duplications are
actually dubious and that a correction specifically focusing
on such duplications is able to significantly improve the accuracy of
the tree. This observation is further supported  by the fact that the number of highest
duplications is lower for corrected trees than for initial trees (data not shown), showing that our
correction algorithms have the general tendency of deleting high duplications.

%% INITIAL POSITION OF TABLE

% ----------------------------------------------------------------------------
\section{Conclusion}

This paper introduces a new methodology combining the supertree and
reconciliation frameworks with the purpose of constructing a gene tree by
combining a set of trees on partial, possibly overlapping data. We also
show how this new paradigm is  useful for gene tree correction. In particular, the
artifact of duplications wrongly inferred close to the root of a gene
tree has been reported in the literature. Here, we propose a new method for
correcting a gene tree, by first removing the higher duplication nodes and then finding the supertree best fitting the species
tree, that preserves the remaining ``trusted'' subtrees, and possibly their
hierarchical position in the initial gene tree. This supertree approach is shown to correct more trees
than the approach based on resolving a polytomy, as the
first correction allows the clustering of genes from different
input subtrees. The  corrected Ensembl gene trees are shown to
exhibit less highest duplication nodes and a lower reconciliation cost. Corrected gene trees are also shown to have a better likelihood support.

This new gene tree construction and correction paradigm leads to many
new open problems. In particular, no proof currently exists on the
complexity of the problem of finding a supergenetree minimizing the
reconciliation cost, although it is likely to be NP-hard, based on the fact
that minimizing the duplication cost is hard.
%conjectured NP-hard, remains to be proven. Then we can ask about the
%approximabiliy of the problems. 
The two problems (reconciliation versus duplication costs) probably
also share the same inapproximability properties.  However, it is
possible that the supertree problems presented here are
fixed-parameter tractable with respect to parameters such as
the number of trees, the minimum reconciliation cost or the size of
the intersection between the leafset of the trees.  This is an area
that deserves a more in-depth investigation.
%In particular, the exponential dynamic
%programming algorithm developed in this paper is likely to be improved
%by considering the properties of input gene trees. 
In addition, while
the extension to the labeled case has been done with the same
exponential complexity, adding the label restriction strongly
constrains the set of explored bipartitions, and we can expect a more
efficient algorithm in this case.

The problems we consider are build upon strong underlying assumptions, such as the
consistency of input gene trees, the compatibility and
$S$-consistency of input gene relations. A natural extension is then
to integrate the notion of a minimal correction of input trees to fit
these preliminary conditions. Finally, from an application point of
view, rather than removing higher duplication nodes, other types of gene tree pruning 
can be envisaged and used to select the initial ``trusted''
phylogenetic information that can then be combined using our
supergenetree and reconciliation framework.

\bibliographystyle{IEEEtran}
\bibliography{superGeneTree}      % Bibliography file (usually '*.bib' )

% Generated by IEEEtran.bst, version: 1.14 (2015/08/26)
\begin{thebibliography}{10}
\providecommand{\url}[1]{#1}
\csname url@samestyle\endcsname
\providecommand{\newblock}{\relax}
\providecommand{\bibinfo}[2]{#2}
\providecommand{\BIBentrySTDinterwordspacing}{\spaceskip=0pt\relax}
\providecommand{\BIBentryALTinterwordstretchfactor}{4}
\providecommand{\BIBentryALTinterwordspacing}{\spaceskip=\fontdimen2\font plus
\BIBentryALTinterwordstretchfactor\fontdimen3\font minus
  \fontdimen4\font\relax}
\providecommand{\BIBforeignlanguage}[2]{{%
\expandafter\ifx\csname l@#1\endcsname\relax
\typeout{** WARNING: IEEEtran.bst: No hyphenation pattern has been}%
\typeout{** loaded for the language `#1'. Using the pattern for}%
\typeout{** the default language instead.}%
\else
\language=\csname l@#1\endcsname
\fi
#2}}
\providecommand{\BIBdecl}{\relax}
\BIBdecl

\bibitem{Bininda04}
O.~Bininda-Emonds, Ed., \emph{Phylogenetic Supertrees combining information to
  reveal The Tree Of Life}, ser. Computational Biology.\hskip 1em plus 0.5em
  minus 0.4em\relax Dordrecht, the Netherlands: Kluwer Academic, 2004.

\bibitem{Bansal10}
M.~Bansal, J.~Burleigh, O.~Eulenstein, and D.~Fern\'andez-Baca,
  ``Robinson-foulds supertrees,'' \emph{Alg. Mol. Biol.}, vol.~5, no.~18, 2010.

\bibitem{Warnow12}
N.~Nguyen, S.~Mirarab, and T.~Warnow, ``{MRL} and {SuperFine+MRL}: new
  supertree methods,'' \emph{Alg. Mol. Biol.}, vol.~7, no.~3, 2012.

\bibitem{PhySIC07}
V.~Ranwez, V.~Berry, A.~Criscuolo, P.~Fabre, S.~Guillemot, C.~Scornavacca, and
  E.~Douzery, ``{PhySIC}: a veto supertree method with desirable properties,''
  \emph{Syst. Biol.}, vol.~56, no.~5, pp. 798\-- 817, 2007.

\bibitem{Douzery10}
V.~Ranwez, A.~Criscuolo, and E.~Douzery, ``{SuperTriplets}: a triplet-based
  supertree approach to phylogenomics,'' \emph{Bioinformatics}, vol.~26,
  no.~12, pp. i115\-- i123, 2010.

\bibitem{Steel08}
M.~Steel and A.~Rodrigo, ``Maximum likelihood supertrees,'' \emph{Syst. Biol.},
  vol.~57, no.~2, pp. 243\-- 250, 2008.

\bibitem{Warnow12b}
M.~Swenson, R.~Suri, C.~Linder, and T.~Warnow, ``{SuperFine}: fast and accurate
  supertree estimation,'' \emph{Sys. Biol.}, vol.~61, no.~2, pp. 214\--227,
  2012, {S}pecial issue RECOMB-CG 2012.

\bibitem{Steel92}
M.~Steel, ``The complexity of reconstructing trees from qualitative characters
  and subtrees,'' \emph{J. Classif.}, vol.~9, pp. 91 \-- 116, 1992.

\bibitem{Scornavacca14}
C.~Scornavacca, L.~van Iersel, S.~Kelk, and D.~Bryant, ``The agreement problem
  for unrooted phylogenetic trees is {FPT},'' \emph{J. Graph Algorithms Appl.},
  vol.~18, no.~3, pp. 385 \-- 392, 2014.

\bibitem{Aho81}
A.~Aho, S.~Yehoshua, T.~Szymanski, and J.~Ullman, ``Inferring a tree from
  lowest common ancestors with an application to the optimization of relational
  expressions,'' \emph{SIAM J. Comput.}, vol.~10, no.~3, pp. 405\-- 421, 1981.

\bibitem{Sankoff95}
M.~Constantinescu and D.~Sankoff, ``An efficient algorithm for supertrees,''
  \emph{J. Classif.}, vol.~12, pp. 101\-- 112, 1995.

\bibitem{Ng96}
M.~Ng and N.~Wormald, ``Reconstruction of rooted trees from subtrees,''
  \emph{Discrete Appl. Math}, vol.~69, pp. 19\-- 31, 1996.

\bibitem{Semple03}
C.~Semple, ``Reconstructing minimal rooted trees,'' \emph{Discrete Appl.
  Math.}, vol. 127, no.~3, 2003.

\bibitem{OrthoMCL03}
L.~Li, C.~J. Stoeckert, and D.~Roos, ``{OrthoMCL}: identification of ortholog
  groups for eukaryotic genomes,'' \emph{Genome Res.}, vol.~13, pp. 2178\--
  2189, 2003.

\bibitem{InParanoid08}
A.~Berglund, E.~Sjolund, G.~Ostlund, and E.~Sonnhammer, ``{InParanoid} 6:
  eukaryotic ortholog clusters with inparalogs,'' \emph{Nucleic Acids Res.},
  vol.~36, pp. D263 \-- D266, 2008.

\bibitem{Proteinortho11}
M.~Lechner, S.~Findeib, L.~Steiner, M.~Marz1, P.~Stadler, and S.~Prohaska,
  ``Proteinortho: detection of (co-)orthologs in large-scale analysis,''
  \emph{BMC Bioinformatics}, vol.~12, p. 124, 2011.

\bibitem{HAHN07}
M.~Hahn, ``Bias in phylogenetic tree reconciliation methods: implications for
  vertebrate genome evolution,'' \emph{Genome Biol.}, vol.~8, no. R141, 2007.

\bibitem{TREEFIX}
Y.-C. Wu, M.~D. Rasmussen, M.~S. Bansal, and M.~Kellis, ``{TreeFix:
  Statistically Informed Gene Tree Error Correction Using Species Trees},''
  \emph{Syst. Biol.}, vol.~62, no.~1, pp. 110--120, 2013.

\bibitem{CHEN-JCOMPBIOL7}
K.~Chen, D.~Durand, and M.~Farach-Colton, ``Notung: Dating gene duplications
  using gene family trees,'' \emph{J. Computat. Biol.}, vol.~7, pp. 429--447,
  2000.

\bibitem{Zheng-Zhang2014}
Y.~Zheng and L.~Zhang, ``Reconciliation with non-binary gene trees revisited,''
  in \emph{Proceedings of RECOMB 2014}, ser. Lecture Notes Comput. Sci., vol.
  8394, 2014, pp. 418\--432, proceedings of RECOMB.

\bibitem{SwensonMabrouk12}
K.~M. Swenson and N.~El-Mabrouk, ``Gene trees and species trees: Irreconcilable
  differences,'' \emph{BMC Bioinformatics}, vol.~13, no. (Suppl 19), p. S15,
  2012.

\bibitem{TREEFIXDTL}
M.~S. Bansal, Y.~Wu, E.~J. Alm, and M.~Kellis, ``Improved gene tree error
  correction in the presence of horizontal gene transfer,''
  \emph{Bioinformatics}, vol.~31, no.~8, pp. 1211--1218, 2015.

\bibitem{noutahi2016}
E.~Noutahi, M.~Semeria, M.~Lafond, J.~Seguin, B.~Boussau, L.~Guéguen,
  N.~El-Mabrouk, and E.~Tannier, ``Efficient gene tree correction guided by
  genome evolution,'' \emph{Plos One}, 2016, to appear.

\bibitem{LafondNoutahi2016}
M.~Lafond, E.~Noutahi, and N.~El-Mabrouk, ``{Efficient Non-Binary Gene Tree
  Resolution with Weighted Reconciliation Cost},'' in \emph{27th Annual
  Symposium on Combinatorial Pattern Matching (CPM 2016)}, ser. Leibniz
  International Proceedings in Informatics (LIPIcs), vol.~54, 2016, pp.
  14:1--14:12.

\bibitem{Liberles08}
S.~Massey, A.~Churbanov, S.~Rastogi, and D.~Liberles, ``Characterizing positive
  and negative selection and their phylogenetic effects,'' \emph{Gene}, vol.
  418, pp. 22\-- 26, 2008.

\bibitem{Skovgaard06}
M.~Skovgaard, J.~Kodra, D.~Gram, S.~Knudsen, D.~Madsen, and D.~Liberles,
  ``Using evolutionary information and ancestral sequences to understand the
  sequence-function relationship in {GLP}-1 agonists,'' \emph{J. Mol. Biol.},
  vol. 363, pp. 977\-- 988, 2006.

\bibitem{Taylor05}
S.~Taylor, K.~de~la Cruz, M.~Porter, and M.~Whiting, ``Characterization of the
  long-wavelength opsin from {M}ecoptera and {S}iphonaptera: does a flea see?''
  \emph{Mol. Biol. Evol.}, vol.~22, pp. 1165\-- 1174, 2005.

\bibitem{Lafond15}
M.~Lafond, A.~Ouangraoua, and N.~El-Mabrouk, ``Reconstructing a supergenetree
  minimizing reconciliation,'' \emph{BMC Genomics}, vol.~16, p.~S4, 2015,
  {S}pecial issue of RECOMB-CG 2015.

\bibitem{Fitch2000}
W.~M. Fitch, ``Homology. a personal view on some of the problems,''
  \emph{Trends Genet.}, vol.~16, no.~5, pp. 227\-- 231, 2000.

\bibitem{lafond2014orthology}
M.~Lafond and N.~El-Mabrouk, ``Orthology and paralogy constraints:
  satisfiability and consistency,'' \emph{BMC Genomics}, vol.~15, no. Suppl 6,
  p. S12, 2014, {S}pecial issue RECOMB-CG 2014.

\bibitem{Zhang97}
L.~Zhang, ``On a {M}irkin-{M}uchnik-{S}mith conjecture for comparing molecular
  phylogenies,'' \emph{J. Comput. Biol.}, vol.~4, no.~2, pp. 177\-- 187, 1997.

\bibitem{CHAUVE-ELMABROUK09}
C.~Chauve and N.~El-Mabrouk, ``New perspectives on gene family evolution:
  losses in reconciliation and a link with supertrees,'' in \emph{Proceedings
  of RECOMB 2009}, ser. Lecture Notes Comput. Sci., vol. 5541.\hskip 1em plus
  0.5em minus 0.4em\relax Springer, 2009, pp. 46\--58.

\bibitem{Hellmuth-2013}
M.~Hellmuth, M.~Hernandez-Rosales, K.~Huber, V.~Moulton, P.~Stadler, and
  N.~Wieseke, ``Orthology relations, symbolic ultrametrics, and cographs,''
  \emph{J. Math. Biol.}, vol.~66, no. 1--2, pp. 399--420, 2013.

\bibitem{guindon2003simple}
S.~Guindon and O.~Gascuel, ``A simple, fast, and accurate algorithm to estimate
  large phylogenies by maximum likelihood,'' \emph{Syst. Biol.}, vol.~52,
  no.~5, pp. 696--704, 2003.

\bibitem{CONSEL}
H.~Shimodaira and M.~Hasegawa, ``{CONSEL}: for assessing the confidence of
  phylogenetoc tree selection,'' \emph{Bioinformatics}, vol.~17, pp. 1246\--
  1247, 2001.

\end{thebibliography}

\end{document}